\newtheorem{assumption}{Assumption}
\newtheorem{remark}{Remark}
\newtheorem{theorem}{Theorem}
\newtheorem{lemma}{Lemma}
\newcommand{\bi}{\begin{itemize}}
	\newcommand{\ei}{\end{itemize}}
\newcommand{\be}{\begin{equation}}
	\newcommand{\ee}{\end{equation}}
\newcommand{\bd}{\begin{displaymath}}
	\newcommand{\ed}{\end{displaymath}}
\newcommand{\bea}{\begin{eqnarray}}
	\newcommand{\eea}{\end{eqnarray}}
\newcommand{\ba}{\begin{array}}
	\newcommand{\ea}{\end{array}}
\newcommand{\bc}{\begin{center}}
	\newcommand{\ec}{\end{center}}
\begin{document}

\title{Selective Memory Recursive Least Squares: {Recast Forgetting into Memory in RBF Neural Network Based Real-Time Learning}}
%
%
%

\author{Yiming Fei,~\IEEEmembership{Student Member,~IEEE}, Jiangang Li$^{*}$,~\IEEEmembership{Senior Member,~IEEE}, Yanan Li$^{*}$,~\IEEEmembership{Senior Member,~IEEE}

\thanks{Y. Fei is with the School of Mechanical Engineering and Automation, Harbin Institute of Technology, Shenzhen 518055, China.}
\thanks{*Correspondence: J. Li is with the School of Mechanical Engineering and Automation, Harbin Institute of Technology, Shenzhen 518055, China Email: jiangang\_lee@hit.edu.cn}
\thanks{*Correspondence: Y. Li is with the Department of Engineering and Design, University of Sussex, Brighton BN1 9RH, UK Email: yl557@sussex.ac.uk}
}




\maketitle

\begin{abstract}
{In radial basis function neural network (RBFNN) based real-time learning tasks, forgetting mechanisms are widely used such that the neural network can keep its sensitivity to new data.} However, with forgetting mechanisms, some useful knowledge will get lost simply because they are learned a long time ago, which we refer to as the passive knowledge forgetting phenomenon. To address this problem, this paper proposes a real-time training method named selective memory recursive least squares (SMRLS) in which the classical forgetting mechanisms are recast into a memory mechanism. {Different from the forgetting mechanism, which mainly evaluates the importance of samples according to the time when samples are collected, the memory mechanism evaluates the importance of samples through both temporal and spatial distribution of samples}. With SMRLS, the input space of the RBFNN is evenly divided into a finite number of partitions and a synthesized objective function is developed using synthesized samples from each partition. {In addition to the current approximation error, the neural network also updates its weights according to the recorded data from the partition being visited. Compared with classical training methods including the forgetting factor recursive least squares (FFRLS) and stochastic gradient descent (SGD) methods, SMRLS achieves improved learning speed and generalization capability, which are demonstrated by corresponding simulation results.} 

\end{abstract}


\begin{IEEEkeywords}
	Selective memory recursive least squares (SMRLS), radial basis function neural network (RBFNN), neural network control, forgetting factor, system identification
\end{IEEEkeywords}

\IEEEpeerreviewmaketitle

\section{Introduction}
\IEEEPARstart{A}s a result of its linearly parameterized form and universal approximation capabilities, the radial basis function neural network (RBFNN) has become one of the most popular function approximators in the field of adaptive control and system identification \cite{r1,r2,r3,r4,r5}. Since the parameter convergence of approximators can improve the control performance and robustness of adaptive systems, the real-time learning problem of the RBFNN has received extensive attention \cite{r6,r7,r8,r9}. Among different real-time training methods, the stochastic gradient descent (SGD) based and recursive least squares (RLS) based methods have been widely applied to various systems \cite{r10, r11, r12, r13}. Since the Lyapunov stability theory based analysis often leads to a SGD based weight update law, SGD has become one of the most popular training methods in the field of neural network control \cite{r14,r15,r16}. Compared with SGD, the RLS method achieves higher convergence speed by adjusting the adaptation gain matrix \cite{r17,r18}. However, the adaptation gain of ordinary RLS method will continually decrease as new samples are learned such that the neural network gradually becomes insensitive to new data, degrading the effectiveness of RLS under parameter perturbation. 

{In order to keep the neural network sensitive to new data, various forgetting mechanisms have been proposed to improve the RLS method. As the most common forgetting mechanism, the forgetting factor makes the neural network pay more attention to new samples by reducing the weight of past samples in the objective function \cite{r19}. To further improve the performance of the forgetting factor recursive least squares (FFRLS) method, different variable forgetting factors have been proposed. Specifically, a gradient-based variable forgetting factor is proposed to improve the tracking capability of RLS, in which the value of the forgetting factor is adjusted by the gradient of the mean square error \cite{r20}. In \cite{r21}, an improved variable forgetting factor recursive least squares (VFFRLS) method is proposed by adjusting the forgetting factor according to the time-averaging estimate of the autocorrelation of a priori and a posteriori errors. In \cite{r22}, a typical variable forgetting factor is proposed to summarize a class of variable-rate-forgetting methods, and the convergence and consistency of the algorithm are also analyzed. In addition to forgetting factors, some other forgetting mechanisms including the sliding window forgetting, covariance resetting and covariance modification can also keep the neural network sensitive to new data \cite{r23,r24,r25,r26}. }

{However, with these forgetting mechanisms, the improved sensitivity to new data is obtained at the cost of forgetting the past data. Since past training samples may contain valuable information about the system, forgetting of them will prevent the accumulation of knowledge in the neural network. In this paper, a phenomenon named passive knowledge forgetting is introduced to represent the decreasing weight of past samples in the objective function over time. Corresponding analysis shows that this phenomenon happens in both RLS methods with forgetting mechanisms and the SGD method. To suppress this phenomenon, an algorithm that evaluates the importance of samples not only based on their temporal distribution is required to be developed.}

In this paper, an algorithm named selective memory recursive least squares (SMRLS) is proposed to simultaneously suppress passive knowledge forgetting and keep the RBFNN sensitive to new data. With SMRLS, the forgetting mechanisms are recast into a memory mechanism which evaluates the importance of samples through both temporal and spatial distribution of samples. The implementation of SMRLS follows the following steps: 
\begin{enumerate}
	\item{The input space of the RBFNN is normalized and evenly divided into finite partitions.  
	}
	\item{Memory update functions are designed according to a priori knowledge to synthesize samples within the same partition into one sample. 
	}
	\item{A synthesized objective function is proposed by summing up the squared approximation errors of the synthesized samples and optimized through a recursive algorithm. 
	}
\end{enumerate}

While the distribution of partitions represents the evaluation criteria for the importance of samples on the spatial scale, the design of memory update functions reflects the criteria on the temporal scale. Based on the assumption that the latest sample within a partition has higher reliability than past samples within the same partition, a specific memory update function is designed such that the synthesized sample within a partition is set to the latest sample collected from the same partition. Compared with SGD and RLS with forgetting mechanisms, SMRLS achieves the following merits: 
\begin{enumerate}
	\item{While making the neural network sensitive to new data, SMRLS suppresses the passive knowledge forgetting phenomenon and thus obtains faster learning speed.  
	}
	\item{{The generalization capability of the learned knowledge between different tasks is improved because the non-uniformity between the training set and testing set is suppressed by the approximately uniform distribution of the synthesized samples.}  
	}
\end{enumerate}

{It should be noted that a method named variable-direction forgetting recursive least squares (VDFRLS) also has the potential to suppress the passive knowledge forgetting phenomenon by restricting forgetting to the information-rich subspace \cite{r27,r28,r29,r30,r31,r32}. However, it is computationally complex compared with SMRLS. Moreover, there is still no explicit conclusion about the objective function corresponding to its update law, making it difficult to predict the performance of this algorithm before specific tasks are carried out \cite{r32}. }

The rest of this paper is organized as follows: Section \ref{section2} introduces some preliminaries and formulates the real-time function approximation problem. In Section \ref{section3}, the implementation and theoretical analysis of SMRLS are presented in detail. Section \ref{section4} discusses the computational complexity, learning performance, prospects and possible improvements of SMRLS. Corresponding simulation results are given in Section \ref{section5} to demonstrate the effectiveness of SMRLS. Finally, Section \ref{section6} concludes the main results of this paper.

\section{Problem Formulation And Preliminaries}\label{section2}
In Section \ref{section2-1}, an open-loop real-time function approximation problem is formulated. Section \ref{section2-2}, \ref{section2-3}, \ref{section2-4} provide a brief introduction to the RBFNN based function approximation, SGD method and RLS method, respectively.

\subsection{Real-Time Function Approximation: An Open-Loop Case}\label{section2-1}
Consider an open-loop real-time approximation task where the output of the approximator will not influence the state trajectory of the system. The function to be approximated is described as follows:   
\begin{equation}\label{eq1}
	\left\{ \begin{gathered}
		y = f(x)+v \hfill \\
		x = g(t) \hfill \\ 
	\end{gathered}  \right.
\end{equation}
where $g(t):\left[ {0,\infty } \right) \mapsto {\mathbb{R}^n}$, $f(x):{\mathbb{R}^n} \mapsto \mathbb{R}$ are unknown continuous functions, $x \in {\Omega _x} \subseteq {\mathbb{R}^n}$ is an independent variable defined on the compact set $\Omega_x$, $y \in \mathbb{R}$ is the measurement of $f(x)$, and $v\in \mathbb{R}$ is the measurement noise. 

\begin{assumption}\label{assumption1}
To simplify the subsequent derivation, it is assumed that the measurement is accurate such that $v=0$. 
\end{assumption}

Assume that the output of the approximator is $\hat f(x,\theta)$ with the parameter $\theta \in \mathbb{R}^q$. To obatin an approximation which has good generalization capability over $\Omega _x$, the following objective function is considered: 
\begin{equation}\label{eq2}
J(\theta ) = {\int_{{\Omega _x}} {\left\| {f(x) - \hat f(x,\theta )} \right\|} ^2}dx. 
\end{equation}
The optimal parameter $\theta^*$ is defined as:
\begin{equation}\label{eq3}
{\theta ^ * }  \triangleq  \mathop {\arg \min }\limits_{\theta  \in {\mathbb{R}^q}} J(\theta ). 
\end{equation}

Let $i = 1,2, \ldots ,m$ denote the serial number of the sampling time, $m$ denote the total number of samples, and the real-time approximation objective is to constantly improve the estimation of $\theta^*$ as samples $(x(i),y(i))$ keep being collected. To estimate $\theta^*$ with sampled data, consider the following sum of squared error (SSE) objective function: 
\begin{equation}\label{eq4}
{J_f}(\theta ) =  \sum\limits_{i = 1}^m {{{\left\| {y(i) - \hat f(x(i),\theta )} \right\|}^2}}. 
\end{equation}
If the samples $(x(i),y(i))$ are uniformly distributed over the compact set $\Omega_x$ under Assumption \ref{assumption1}, we obtain: 
\begin{equation}\label{eq5}
\mathop {\lim }\limits_{m \to \infty } \frac{1}{m} {J_f}(\theta ) = J(\theta )
\end{equation}
which indicates the importance of the uniform distribution of samples for accurate approximation \cite{r33}. Let $\theta _f^*$ denote the minimizer of ${J_f}(\theta )$ as follows: 
\begin{equation}\label{eq6}
\theta _f^* \triangleq \mathop {\arg \min }\limits_{\theta  \in {\mathbb{R}^q}} {J_f}(\theta ). 
\end{equation}
A sufficient condition for $\theta _f^*$ to estimate $\theta ^*$ accurately can be expressed as: i) the number of samples is sufficient, and ii) the samples are uniformly distributed over $\Omega_x$. For real-time approximation tasks, the condition is difficult to satisfy because of the difficulties in controlling the distribution of samples. 

\subsection{RBFNN Based Function Approximation}\label{section2-2}
The RBFNN is a class of single hidden layer feedforward neural networks whose activation function is the radial basis function. As a result of its linearly parameterized form, low computational complexity and fast convergence speed, it has been applied to various fields including adaptive control, system identification and pattern recognition \cite{r2,r3,r4,r5,r34,r35}. The output of an RBFNN can be formulated as:  
\begin{equation}\label{eq7}
{f_{NN}}(\chi ) = \sum\limits_{i = 1}^N {{w_i}{\phi _i}(\chi ) = } {W^T}\Phi (\chi )
\end{equation}
where $\chi  \in {\mathbb{R}^l} $ is the input vector, ${f_{NN}}(\chi ) \in \mathbb{R}$ is the output, $W = {[{w_1},{w_2}, \ldots ,{w_N}]^T} \in {\mathbb{R}^N}$ is the weight vector, $N$ is the number of neurons, and $\Phi (\chi ) = {\left[ {{\phi _1}(\chi ),{\phi _2}(\chi ), \ldots ,{\phi _N}(\chi )} \right]^T} \in {\mathbb{R}^N}$ is the regressor vector composed of radial basis functions ${\phi _i}(\chi ),i = 1,2, \ldots ,N$. Without loss of generality, ${\phi _i}(\chi )$ is set as the Gaussian function in this paper: 
\begin{equation}\label{eq8}
{\phi _i}(\chi ) = \exp \left( { - {{\left\| {\chi  - {c_i}} \right\|}^2}/2\sigma _i^2} \right),i = 1,2, \ldots N
\end{equation}
where ${c_i} \in {\mathbb{R}^l}$ is the center and ${\sigma _i} \in \mathbb{R}$ is the receptive field width of ${\phi _i}(\chi )$. 

Consider the function approximation problem in Section \ref{section2-1}. RBFNNs have the ability to approximate any continuous function $f(x):{\mathbb{R}^n}  \mapsto  \mathbb{R}$ over a compact set ${\Omega _x } \subseteq {\mathbb{R}^n}$ to arbitrary accuracy on the premise of sufficient neurons and suitably designed centers and receptive field widths \cite{r1}. Then $f(x)$ can be approximated as: 
\begin{equation}\label{eq9}
f(x ) = {W^{*T}}\Phi (x ) + \epsilon(x ),\forall x  \in {\Omega _x }
\end{equation}
where $W^{*}\in {\mathbb{R}^N}$ is the optimal weight vector of the approximation, and $\epsilon (x)$ is the bounded approximation error satisfying $\left| {\epsilon(x )} \right| < {\epsilon^*},\forall x \in {\Omega _x }$. In this paper, the optimal value of $W^{*}\in {\mathbb{R}^N}$ is defined as follows: 
\begin{equation}\label{eq10}
{W^*} \triangleq \mathop {\arg \min }\limits_{W \in {\mathbb{R}^N}} \left\{ {{{\int_{{\Omega _x }} {\left\| {f(x ) - {W^T}\Phi (x )} \right\|}^2 }}dx } \right\}
\end{equation}	
where the compact set ${\Omega _x }$ is known as the input space of the RBFNN. 

{According to different settings of neuron centers and receptive field widths, the RBFNN can be divided into localized networks and global networks \cite{r7,r36}. For global RBFNN, the approximation over a region of the input space is carried out by all of the neurons. By contrast, the localized RBFNN accomplishes the local approximation only using the neurons around corresponding region of the input space. }

\subsection{Stochastic Gradient Descent Based Training for RBFNNs}\label{section2-3}
The gradient descent (GD) method is one of the most effective methods to train the RBFNN \cite{r37,r38}. When samples are collected in chronological order, the batch size of GD is set to one such that the SGD method is obtained and used to train the RBFNN in real-time. To estimate $W^*$, the following squared error objective function is considered: 
\begin{equation}\label{eq11}
{J_{GD}}(W,k) = \frac{1}{2}{\left\| {y(k) - {W^T}\Phi (x(k))} \right\|^2}
\end{equation}
where $k$ is the current sampling time. The gradient of ${J_{GD}}(W,k)$ corresponding to $W$ is formulated as: 
\begin{equation}\label{eq12}
	\begin{aligned}
\nabla {J_{GD}}(W,k) &= \frac{{\partial {J_{GD}}(W,k)}}{{\partial W}} \\
 &=  - \Phi (x(k))\left[ {y(k) - {W^T}\Phi (x(k))} \right]. 
 	\end{aligned}
\end{equation}
Then $W$ is updated in the direction of $-\nabla {J_{GD}}(W,k)$ as follows: 
\begin{equation}\label{eq13}
W(k + 1) = W(k) - \eta \nabla {J_{GD}}(W,k)
\end{equation}
where $\eta  \in {\mathbb{R}^{+}}$ is a designed learning rate. 

\subsection{Recursive Least Squares Based Training for RBFNNs}\label{section2-4}
Thanks to its linearly parameterized form, the RBFNN can learn with linear optimization methods such as RLS \cite{r2}. The classical RLS method with a forgetting factor is formulated as follows: 
\begin{itemize}[leftmargin=*]
	\item \textbf{RBFNN-RLS}
	\begin{equation}\label{eq14}
		\begin{small}
			\begin{aligned}
				&{W}(k) = {W }(k - 1) + P(k)\Phi ({{  x(k)}})\left[ {{{  y(k)}} - W ^T(k - 1)\Phi ({{  x(k)}})} \right] \hfill \\
				&{P^{ - 1}}(k) = \lambda{P^{ - 1}}(k - 1) + \Phi ({{  x(k)}}){\Phi ^T}({{  x(k)}}) \hfill \\
				&\Phi ({{  x(k)}}) = {\left[ {{\phi _1}({{  x(k)}}),{\phi _2}({{  x(k)}}), \ldots ,{\phi _N}({{  x(k)}})} \right]^T} \hfill \\ 
			\end{aligned} 
		\end{small}
	\end{equation}
where $P(k)$ is the covariance matrix at sampling time $k$ and $\lambda  \in \left( {0,1} \right]$ is the forgetting factor. {With matrix inversion lemma, the calculation of $P^{-1}(k)$ is avoided and \eqref{eq14} is reformulated as follows \cite{r39}.}
\item \textbf{RBFNN-RLS (matrix inversion lemma)}
\begin{equation}\label{eq15}
	\begin{small}
		\begin{aligned}
			&{W }(k) = {W }(k - 1) + P(k)\Phi ({{  x(k)}})\left[ {{{  y(k)}} - W ^T(k - 1)\Phi ({{  x(k)}})} \right] \hfill \\
			&P(k) = \frac{1}{\lambda }\left[ {P(k - 1) - \frac{{P(k - 1)\Phi (x(k)){\Phi ^T}(x(k))P(k - 1)}}{{\lambda  + {\Phi ^T}(x(k))P(k - 1)\Phi (x(k))}}} \right] \hfill \\
			&\Phi ({{  x(k)}}) = {\left[ {{\phi _1}({{  x(k)}}),{\phi _2}({{  x(k)}}), \ldots ,{\phi _N}({{  x(k)}})} \right]^T}. \hfill \\ 
		\end{aligned} 
	\end{small}
\end{equation}
\end{itemize}

\begin{lemma}[Objective Function of RLS \cite{r19}]\label{lemma1}
Let $W_0 \in {\mathbb{R}^N}$, a positive definite matrix $P_0 \in {\mathbb{R}^{N \times N}}$ be the initial values of $W(k)$ and $P(k)$, respectively. For all $k>0$, denote the minimizer of the function
\begin{equation}\label{eq16}
		\begin{aligned}
{J_{RLS}}(W,k) =& \sum\limits_{i = 1}^k {{\lambda ^{k - i}}{{\left\| {y(i) - {W^T}\Phi (x(i))} \right\|}^2}}  \\
&+ {\lambda ^k}{\left( {W - {W_0}} \right)^T}P_0^{ - 1}\left( {W - {W_0}} \right)
		\end{aligned}
\end{equation}
by
\begin{equation}\label{eq17}
W(k) \triangleq \mathop {\arg \min }\limits_{W \in {\mathbb{R}^N}} {J_{RLS}}(W,k).
\end{equation}
Then $W(k)$ is given by \eqref{eq14} or \eqref{eq15} for all $k>0$. 
\end{lemma}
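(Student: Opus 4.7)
The plan is to prove the lemma by first computing the closed-form minimizer of $J_{RLS}(W,k)$ via the first-order optimality condition, and then showing that this minimizer admits the recursive form \eqref{eq14}; the form \eqref{eq15} will then follow immediately from the matrix inversion lemma applied to the $P^{-1}(k)$ recursion. The starting observation is that $J_{RLS}(W,k)$ is strictly convex in $W$, since its Hessian $2\sum_{i=1}^{k}\lambda^{k-i}\Phi(x(i))\Phi^T(x(i)) + 2\lambda^k P_0^{-1}$ is positive definite whenever $P_0 \succ 0$. The regularization term $\lambda^k(W-W_0)^T P_0^{-1}(W-W_0)$ is essential here: without it the Hessian could be rank-deficient for small $k$ and the minimizer would fail to be unique. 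Setting $\nabla_W J_{RLS}(W,k)=0$ and defining $P^{-1}(k) \triangleq \sum_{i=1}^{k}\lambda^{k-i}\Phi(x(i))\Phi^T(x(i)) + \lambda^k P_0^{-1}$ yields the closed-form solution
\begin{equation*}
W(k) = P(k)\left[\sum_{i=1}^{k}\lambda^{k-i}\Phi(x(i))y(i) + \lambda^k P_0^{-1} W_0\right],
\end{equation*}
with $W(0)=W_0$ and $P(0)=P_0$ consistent with the prescribed initial conditions.

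Next I would telescope these sums into recursions. Peeling off the $i=k$ term of $P^{-1}(k)$ immediately gives $P^{-1}(k) = \lambda P^{-1}(k-1) + \Phi(x(k))\Phi^T(x(k))$, which is the second line of \eqref{eq14}. Letting $b(k) \triangleq \sum_{i=1}^{k}\lambda^{k-i}\Phi(x(i))y(i) + \lambda^k P_0^{-1} W_0$, the same peeling yields $b(k) = \lambda b(k-1) + \Phi(x(k))y(k)$, while the closed form at step $k-1$ gives $b(k-1) = P^{-1}(k-1)W(k-1)$. Substituting these two relations into $P^{-1}(k)W(k) = b(k)$ and replacing $\lambda P^{-1}(k-1)$ with $P^{-1}(k) - \Phi(x(k))\Phi^T(x(k))$ collapses the expression to
\begin{equation*}
P^{-1}(k)W(k) = P^{-1}(k)W(k-1) + \Phi(x(k))\left[y(k) - \Phi^T(x(k))W(k-1)\right],
\end{equation*}
and premultiplying by $P(k)$ delivers the first line of \eqref{eq14}. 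For \eqref{eq15}, applying the Sherman--Morrison identity to the rank-one recursion of $P^{-1}(k)$ produces the explicit formula for $P(k)$ without requiring a matrix inversion at each step.

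The main obstacle I expect is the bookkeeping around the regularization term $\lambda^k P_0^{-1}W_0$: it must be carried consistently through both telescoping identities so that the base case $W(0)=W_0$ and the recursions for $P^{-1}(k)$ and $b(k)$ line up simultaneously, and so that $P^{-1}(k)$ remains invertible for every $k\ge 0$. Once these recursions are established, the conversion to innovation form is a short rearrangement, and the derivation of \eqref{eq15} is a one-line application of Sherman--Morrison. Everything else, including convexity and uniqueness of the minimizer, is immediate from positive definiteness of the Hessian.
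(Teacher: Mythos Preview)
Your derivation is correct and follows the standard textbook route: compute the closed-form minimizer from the first-order condition, identify $P^{-1}(k)$ and the right-hand-side vector $b(k)$, telescope both to obtain one-step recursions, substitute to reach the innovation form, and finish with Sherman--Morrison. All the algebra checks out, including the handling of the regularization term.

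As for comparison: the paper does not supply its own proof of this lemma. It is stated as a cited result (the citation \cite{r19}) and invoked later as a building block in the proof of Theorem~\ref{theorem1}. So there is nothing in the paper to compare your argument against; your proposal is precisely the kind of standard derivation one finds in the reference the authors point to.
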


\begin{remark}[Passive Knowledge Forgetting]\label{remark1}
In an optimization process, the phenomenon that the weight of each sample in the objective function continually decreases over time is named passive knowledge forgetting in this paper. As shown in \eqref{eq11}, \eqref{eq16}, both the SGD and FFRLS methods are accompanied with this phenomenon. Since the learned input-output mapping from the past samples will be gradually forgot by the neural network, passive knowledge forgetting has negative effects on knowledge accumulation in the learning process. 
\end{remark}

\section{Selective Memory Recursive Least Squares}\label{section3}
In this section, a real-time training method for the RBFNN is designed to satisfy the following requirements at the same time: i) sensitivity to new data, ii) suppression of passive knowledge forgetting, iii) approximately uniform distribution of samples considered by the objective function. 
\subsection{Design of the Synthesized Objective Function}\label{section3-1}
{To obtain the objective function of SMRLS, the input space $\Omega_x$ of the RBFNN is normalized into a compact set $\Omega$ which is evenly divided into $N_P$ lattice partitions ${\Omega ^j},j = 1,2, \ldots ,{N_P}$. The partition ${\Omega ^j}$ which has been sampled at least once in the real-time learning process will generate a synthesized sample $({\gamma _j}(k),{\varphi _j}(k))$ with ${\gamma _j}(k) \in {\Omega ^j}$ being the synthesized input and ${\varphi _j}(k)$ being the synthesized output according to all samples within the partition. When the RBFNN is used to approximate $f(x)$ in Section \ref{section2-1}, we have the synthesized input ${\gamma _j}(k) \in {\Omega ^j} \subseteq {\mathbb{R}^n}$, and the synthesized output ${\varphi _j}(k) \in \mathbb{R}$. }

\begin{assumption}\label{assumption2}
The initial value of $({\gamma _j}(k),{\varphi _j}(k))$ is set to ${\gamma _j}(0) = {0_{n \times 1}},{\varphi _j}(0) = 0$ for $j = 1,2, \ldots {N_P}$. 
\end{assumption}

To demonstrate the normalization and partitioning explicitly, an example where $x(k) = {\left[ {{x_1}(k),{x_2}(k)} \right]^T}$ is given as follows: 
\begin{enumerate}
\item{As shown in Fig. \ref{fig1}, the input space is normalized into the scope of $\left[ { - 1,1} \right] \times \left[ { - 1,1} \right]$. In this case, the original input is normalized as follows: 
\begin{equation}\label{eq18}
{{\bar x}_i} = \frac{{2{x_i} - \mathop {\max }\limits_{x \in {\Omega _x}} \left\{ {{x_i}} \right\} - \mathop {\min }\limits_{x \in {\Omega _x}} \left\{ {{x_i}} \right\}}}{{\mathop {\max }\limits_{x \in {\Omega _x}} \left\{ {{x_i}} \right\} - \mathop {\min }\limits_{x \in {\Omega _x}} \left\{ {{x_i}} \right\}}},i = 1,2
\end{equation}
where $\bar x = {\left[ {{{\bar x}_1},{{\bar x}_2}} \right]^T} \in {{ \Omega }}$ is the normalized input. }
\item{The normalized input space ${\Omega} $ is evenly divided into $N_P$ lattice partitions ${\Omega ^j},j = 1,2, \ldots ,{N_P}$ which are mutually disjoint. For each partition ${\Omega ^j}$, there is a synthesized sample $({\gamma _j}(k),{\varphi _j}(k))$ generated according to all samples within the partition. 
}
\end{enumerate}
\begin{figure}[htbp]
	\centering
	\includegraphics[scale=0.33]{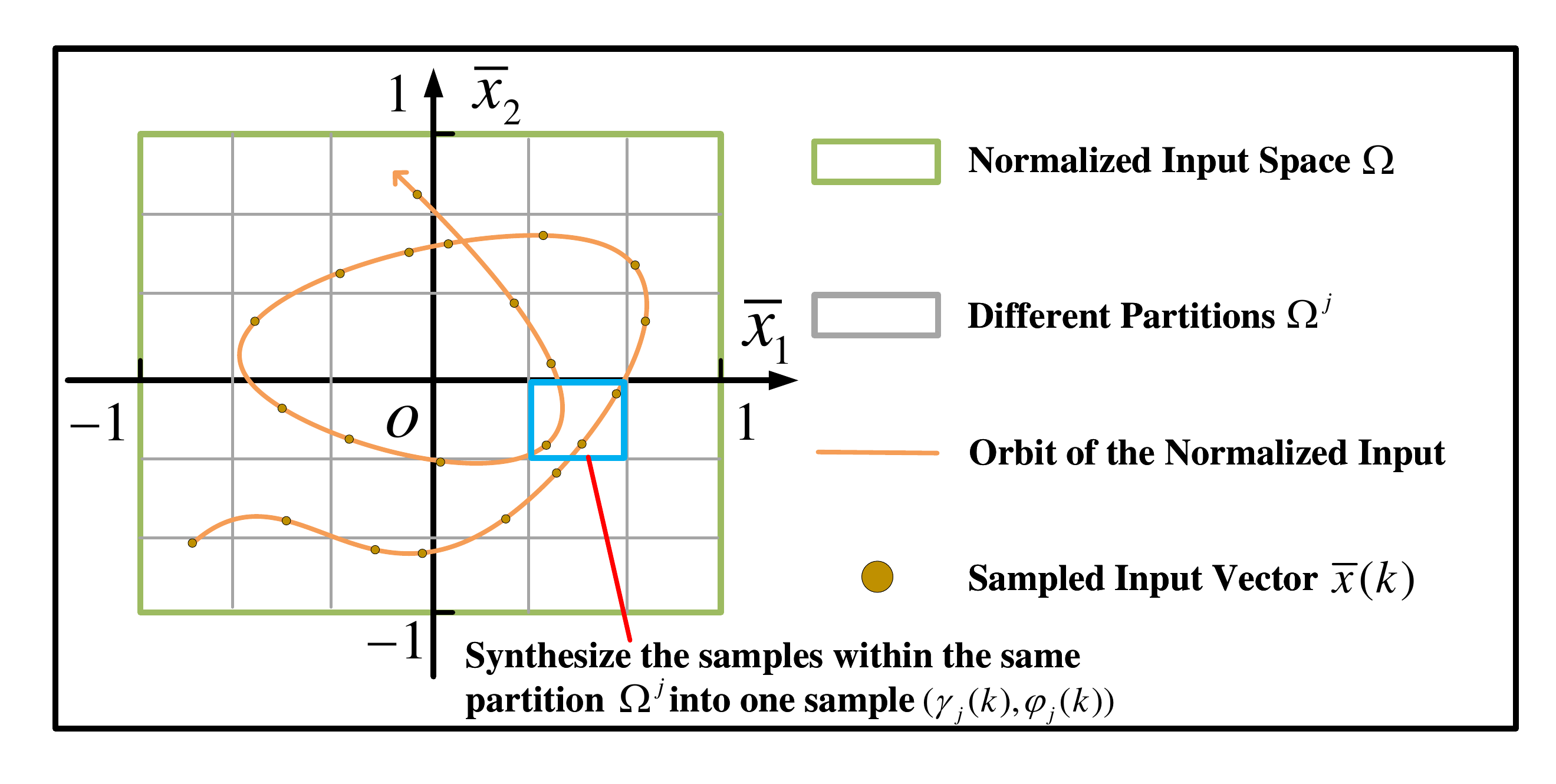}
	\caption{Normalization and partitioning of the input space}\vspace{-1em}\label{fig1}
\end{figure}

\begin{remark}\label{remark2}
The reason for the normalization is to simplify the neuron center distribution of the RBFNN such that $f(x)$ along different state trajectories can be approximated with the same neuron distribution covering the normalized input space. For regularity of the formula, the symbol $x(k)$ is directly used to represent the normalized sampled input at sampling time $k$. 
\end{remark}

Let ${M_P}(k)$ denote the number of partitions which have been sampled at least once before sampling time $k$ (including $k$). Assume that the superscript $j$ of $\Omega^j$ represents that $\Omega^j$ is the $j$th partition visited by $x$ and sampled in the learning process. Similar to \eqref{eq16} with $\lambda=1$, the synthesized objective function at sampling time $k$ is defined as follows: 
\begin{equation}\label{eq19}
	\begin{aligned}
{J_S}(W,k) =& \sum\limits_{j = 1}^{{M_P}(k)} {{{\left\| {{\varphi _j}(k) - {W^T}\Phi ({\gamma _j}(k))} \right\|}^2}}  \\
&+ {\left( {W - {W_0}} \right)^T}P_0^{ - 1}\left( {W - {W_0}} \right)
	\end{aligned}
\end{equation}
where ${P_0} \in {\mathbb{R}^{N \times N}}$ is a designed positive definite matrix. 

\begin{remark}\label{remark3}
	If the state trajectory of $x$ traverses almost all of the $N_P$ partitions of $\Omega$, $({\gamma _j}(k),{\varphi _j}(k))$ are approximately uniformly distributed over $\Omega$ because ${\gamma _j}(k) \in {\Omega ^j} $. 
\end{remark}

\subsection{Design of Memory Update Functions}\label{section3-2}
The synthesized sample $({\gamma _j}(k),{\varphi _j}(k))$ is obtained by synthesizing the samples within ${\Omega ^j}$ according to memory update functions. Consider the following partition judgment function: 
\begin{equation}\label{eq20}
	G(j, x(k)) = \left\{ \begin{gathered}
		0,{\text{   }}for{\text{ }}  x(k) \notin \Omega^j \hfill \\
		1,{\text{   }}for{\text{ }}  x(k) \in \Omega^j  \hfill \\ 
	\end{gathered}  \right.
\end{equation}
which is used to determine in which partition the sample $(x(k),y(k))$ locates. For any $k>1$, $({\gamma _j}(k),{\varphi _j}(k))$ is calculated with the following recursive update laws: 
\begin{equation}\label{eq21}
	\left\{ \begin{aligned}
		&{\gamma _j}(k) = {\gamma _j}(k-1) + G(j, x(k)){S_\gamma }({\gamma _j}(k-1), x(k)) \hfill \\
		&{\varphi _j}(k) = {\varphi _j}(k-1) + G(j, x(k)){S_\varphi }({\varphi _j}(k-1), y(k)) \hfill \\ 
	\end{aligned}  \right.
\end{equation}
where ${S_\gamma }( \cdot )$ and ${S_\varphi }( \cdot )$ are memory update functions. According to \eqref{eq20}, only when the new sample $(x(k),y(k))$ is collected from the partition ${\Omega ^j}$, the synthesized sample $({\gamma _j}(k),{\varphi _j}(k))$ will update in the next time step. 

While the distribution of partitions represents the evaluation criteria for the importance of samples on the spatial scale, the design of memory update functions reflects the criteria on the temporal scale. Since parameter perturbation is common in the field of adaptive control and system identification, the following assumption is considered. 
\begin{assumption}\label{assumption3}
	The latest sample within each partition $\Omega^j,j = 1,2, \ldots ,N_P$ has higher reliability compared with the past samples within the same partition. 
\end{assumption}

With Assumption \ref{assumption3}, the memory update functions are designed as follows: 
\begin{equation}\label{eq22}
	\left\{ \begin{aligned}
		&{S_\gamma }({\gamma _j}(k-1), x(k)) =  - {\gamma _j}(k-1) +  x(k) \hfill \\
		&{S_\varphi }({\varphi _j}(k-1), y(k)) =  - {\varphi _j}(k-1) +  y(k)  \hfill \\ 
	\end{aligned}  \right.
\end{equation}
which means that the synthesized sample of each partition is set to the latest sample within the partition.

\subsection{Recursive Optimization for Synthesized Objective Function}\label{section3-3}
To optimize the synthesized objective function \eqref{eq19} recursively, the update from sampling time $k-1$ to $k$ is considered. Assume that the latest sample at sampling time $k$ is collected from the partition $\Omega^a$, i.e., $\gamma_a(k)= x(k) \in \Omega^a, 1 \leqslant a \leqslant {M_P}(k)$. Let ${\Omega _{{M_P(k-1)}}} = \mathop  \cup \limits_{j = 1}^{{M_P}(k-1)} \Omega^j  \subset {\Omega}$ be the union set of the partitions which have been sampled at least once before sampling time $k-1$. Then $M_P(k)$ can be calculated by: 
\begin{equation}\label{eq23}
	{M_P}(k) = \left\{ \begin{aligned}
		&{M_P}(k-1),{\text{ }\text{ }\text{ }\text{ }\text{ }\text{ }}for{\text{ }}\forall x(k) \in {\Omega _{{M_P(k-1)}}} \hfill \\
		&{M_P}(k-1) + 1,{\text{   }}for{\text{ }}\forall x(k) \notin {\Omega _{{M_P(k-1)}}}.  \hfill \\ 
	\end{aligned}  \right.
\end{equation}

\begin{theorem}\label{theorem1}
{Let $W_0 \in {\mathbb{R}^N}$, a positive definite matrix $P_0 \in {\mathbb{R}^{N \times N}}$ be the initial values of $W(k)$ and $P(k)$, respectively. Under Assumption \ref{assumption2}, let $W(k)$ denote the minimizer of $J_S(W,k)$ as follows: 
\begin{equation}\label{eq24}
W(k) = \mathop {\arg \min }\limits_{W \in {\mathbb{R}^N}} {J_S}(W,k)
\end{equation}
for any $k>0$ , and then $W(k)$ can be calculated by $W(k-1)$ with the following update laws: 
\begin{equation}\label{eq25}
	\begin{footnotesize}
\begin{aligned}
	&W(k) = W(k-1) + P(k)\Phi ({{\gamma _a}(k)})\left[ {{{\varphi _a}(k)} - {W^T}(k-1)\Phi ({{\gamma _a}(k)})} \right] \hfill \\
	&- P(k)\Phi (\gamma_a(k-1))\left[ {\varphi_a(k-1) - {W^T}(k-1)\Phi (\gamma_a(k-1))} \right] \hfill \\ 
\end{aligned} 
	\end{footnotesize}
\end{equation}
where 
\begin{equation}\label{eq26}
	\begin{small}
		\begin{aligned}
{P^{ - 1}}(k) = {P^{ - 1}}(k - 1) + \Phi ({{\gamma _a}(k)}){\Phi ^T}({{\gamma _a}(k)}) \\
- \Phi (\gamma_a(k-1)){\Phi ^T}(\gamma_a(k-1)). 
		\end{aligned} 
	\end{small}
\end{equation}}
\end{theorem}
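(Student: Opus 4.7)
My plan is to solve \eqref{eq24} in closed form by writing the first-order optimality condition for $J_S(W,k)$, identify matrix and vector quantities that admit a rank-one one-step update, and then manipulate these updates algebraically into the recursion \eqref{eq25}--\eqref{eq26}. First I would compute $\nabla_W J_S(W,k)$ from \eqref{eq19} and set it to zero, producing the normal equation
\begin{equation*}
\Bigl[P_0^{-1} + \sum_{j=1}^{M_P(k)} \Phi(\gamma_j(k))\Phi^T(\gamma_j(k))\Bigr] W(k) = P_0^{-1} W_0 + \sum_{j=1}^{M_P(k)} \Phi(\gamma_j(k))\varphi_j(k).
\end{equation*}
Denote the bracketed matrix by $P^{-1}(k)$ (invertible because $P_0$ is positive definite) and the right-hand side by $b(k)$, so that $W(k) = P(k) b(k)$ and the analogous identity holds at time $k-1$.

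Next, I would use the selective memory mechanism to compute the increments of $P^{-1}$ and $b$ between $k-1$ and $k$. Because the partition indicator $G(j,x(k))$ in \eqref{eq20} is nonzero only for $j=a$, the update rules \eqref{eq21}--\eqref{eq22} leave $(\gamma_j,\varphi_j)$ unchanged for every $j \ne a$. Hence the sums defining $P^{-1}$ and $b$ differ only in the $j=a$ term, which yields the rank-one relation \eqref{eq26} together with
\begin{equation*}
b(k) - b(k-1) = \Phi(\gamma_a(k))\varphi_a(k) - \Phi(\gamma_a(k-1))\varphi_a(k-1).
\end{equation*}
Substituting $b(k-1) = P^{-1}(k-1) W(k-1)$, using \eqref{eq26} to replace $P^{-1}(k-1)$ by $P^{-1}(k)$ minus the rank-one increment, and left-multiplying by $P(k)$, the cross term $P(k)P^{-1}(k)W(k-1)$ collapses to $W(k-1)$, and the remaining contributions reassemble into exactly the two gain-times-innovation terms of \eqref{eq25}.

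The step I expect to require the most care is the second branch of \eqref{eq23}, in which $M_P(k) = M_P(k-1) + 1$ and $\Omega^a$ is visited for the very first time. Read strictly, the sum in \eqref{eq19} at time $k-1$ has no index-$a$ summand, yet under Assumption \ref{assumption2} the recursion \eqref{eq26} still contains a nontrivial $\Phi(\gamma_a(k-1))\Phi^T(\gamma_a(k-1)) = \Phi(0)\Phi^T(0)$ term. To close this case I would either verify the branch by an ab initio one-step computation with the literal sum range $j=1,\dots,M_P(k-1)$, or adopt the convention that every inactive partition carries a ghost synthesized pair $(0,0)$ so that the sum effectively runs to $N_P$ throughout and the rank-one argument applies verbatim. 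Checking that both conventions yield the same $W(k)$ under Assumption \ref{assumption2}, the recursion \eqref{eq25}--\eqref{eq26} is then obtained uniformly across both branches of \eqref{eq23}.
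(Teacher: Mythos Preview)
Your normal-equation route is essentially the same argument as the paper's, only packaged more directly. The paper also uses that every $(\gamma_j,\varphi_j)$ with $j\ne a$ is unchanged, but instead of writing $P^{-1}(k)W(k)=b(k)$ and subtracting, it splits into the two branches of \eqref{eq23}. For the revisiting branch it introduces a \emph{virtual} objective $J_V(W,k)=J_S(W,k-1)-\|\varphi_a(k-1)-W^T\Phi(\gamma_a(k-1))\|^2$, observes via Lemma~\ref{lemma1} that both $W(k-1)$ and $W(k)$ are one RLS step away from the minimizer $W_V(k)$ of $J_V$, and then eliminates $W_V(k)$ by premultiplying the two RLS identities by $P^{-1}(k-1)$ and $P^{-1}(k)$. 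That elimination is exactly your substitution $b(k-1)=P^{-1}(k-1)W(k-1)$ followed by the use of \eqref{eq26}; your version simply avoids naming $J_V$ and invoking Lemma~\ref{lemma1}.

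Where you should be more careful is your last paragraph. You have correctly spotted the first-visit subtlety, and in fact the paper treats it the same way: it derives the plain one-term RLS step \eqref{eq29}--\eqref{eq30} and then asserts that, because $\gamma_a(k-1)=0_{n\times1}$ and $\varphi_a(k-1)=0$ under Assumption~\ref{assumption2}, this ``conforms to'' \eqref{eq25}--\eqref{eq26}. Your plan to ``check that both conventions yield the same $W(k)$'' will not succeed as stated: for the Gaussian basis \eqref{eq8} one has $\Phi(0)\ne 0$, so the subtracted terms $\Phi(0)\Phi^T(0)$ in \eqref{eq26} and $-P(k)\Phi(0)[0-W^T(k-1)\Phi(0)]$ in \eqref{eq25} are nonzero in general, and the ghost-sample convention and the literal sum over $j\le M_P(k-1)$ give different $P^{-1}(k)$ and different $W(k)$. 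The clean fix is not to reconcile the two readings but to commit to one: either (i) interpret the theorem under the ghost convention, i.e.\ let the sum in \eqref{eq19} run over all partitions with inactive ones contributing $(\gamma_j,\varphi_j)=(0,0)$, in which case your rank-one argument applies verbatim in both branches; or (ii) keep the literal sum and, in the first-visit branch, replace the subtracted $\Phi(\gamma_a(k-1))$ and $\varphi_a(k-1)$ in \eqref{eq25}--\eqref{eq26} by zero (not by $\Phi(0)$), which is what the paper's Case~1 derivation actually produces. Stating whichever convention you adopt removes the ambiguity; the rest of your argument then goes through.
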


\begin{proof}
Since $({\gamma _j}(k),{\varphi _j}(k))$ is the latest sample within $\Omega^j$ before sampling time $k$ (including $k$), the equations 
\begin{equation}\label{eq27}
\left\{ \begin{gathered}
	{\gamma _j}(k) = {\gamma _j}(k - 1) \hfill \\
	{\varphi _j}(k) = {\varphi _j}(k - 1) \hfill \\ 
\end{gathered}  \right.
\end{equation}
hold for any $1 \leqslant j \leqslant {M_P}(k - 1), j \ne a$. 

To find the relationship between $W(k-1)$ and $W(k)$, the following two situations are considered: 

1) If $x(k) \notin {\Omega _{{M_P(k-1)}}}$, we have $a=M_P(k)$, $M_P(k)=M_P(k-1)+1$. Consider \eqref{eq19}, \eqref{eq24} and we obtain: 
\begin{equation}\label{eq28}
\begin{small}
\left\{ \begin{aligned}
	&W(k) = \mathop {\arg \min }\limits_{W \in {\mathbb{R}^N}} {J_S}(W,k) \hfill \\
	&W(k - 1) = \mathop {\arg \min }\limits_{W \in {\mathbb{R}^N}} {J_S}(W,k-1) \hfill \\
	&{J_S}(W,k) = {J_S}(W,k-1) + {\left\| {{\varphi _a}(k) - {W^T}\Phi ({\gamma _a}(k))} \right\|^2}. \hfill \\ 
\end{aligned}  \right.
\end{small}
\end{equation}
According to Lemma \ref{lemma1}, the update from $W(k-1)$ to $W(k)$ can be regarded as a single step of the RLS method. Thus $W(k)$ can be calculated by: 
\begin{equation}\label{eq29}
\begin{footnotesize}
\begin{aligned}
W(k) = W(k - 1) + P(k)\Phi ({\gamma _a}(k))\left[ {{\varphi _a}(k) - {W^T}(k - 1)\Phi ({\gamma _a}(k))} \right]
\end{aligned}
\end{footnotesize}
\end{equation}
where 
\begin{equation}\label{eq30}
\begin{aligned}
	{P^{ - 1}}(k) &= P_0^{ - 1} + \sum\limits_{j = 1}^{{M_P}(k)} {\Phi ({\gamma _j}(k)){\Phi ^T}({\gamma _j}(k))}  \hfill \\ 
	&={P^{ - 1}}(k - 1) + \Phi ({\gamma _a}(k)){\Phi ^T}({\gamma _a}(k)). 
\end{aligned} 
\end{equation}
Since $x(k) \notin {\Omega _{{M_P(k-1)}}}$, we have ${\gamma _a}(k - 1) = {0_{n \times 1}}$ and ${\varphi _a}(k - 1) = 0$. Therefore, the update laws \eqref{eq29}, \eqref{eq30} conform to \eqref{eq25}, \eqref{eq26}, respectively. 

2) If $x(k) \in {\Omega _{{M_P(k-1)}}}$, we have $1 \leqslant a \leqslant {M_P}(k - 1)$, $M_P(k)=M_P(k-1)$.

Define a virtual synthesized objective function as follows: 
\begin{equation}\label{eq31}
{J_V}(W,k) = {J_S}(W,k - 1) - {\left\| {{\varphi _a}(k - 1) - {W^T}\Phi ({\gamma _a}(k - 1))} \right\|^2}. 
\end{equation}
According to \eqref{eq19}, ${J_V}(W,k)$ can also be expressed as: 
\begin{equation}\label{eq32}
{J_V}(W,k) = {J_S}(W,k) - {\left\| {{\varphi _a}(k) - {W^T}\Phi ({\gamma _a}(k))} \right\|^2}. 
\end{equation}
Consider \eqref{eq24}, \eqref{eq31}, \eqref{eq32} and we obtain: 
\begin{equation}\label{eq33}
	\begin{small}
\left\{ \begin{aligned}
	&W(k) = \mathop {\arg \min }\limits_{W \in {\mathbb{R}^N}} {J_S}(W,k) \hfill \\
	&W(k - 1) = \mathop {\arg \min }\limits_{W \in {\mathbb{R}^N}} {J_S}(W,k - 1) \hfill \\
	&{J_S}(W,k) = {J_V}(W,k) + {\left\| {{\varphi _a}(k) - {W^T}\Phi ({\gamma _a}(k))} \right\|^2} \hfill \\
	&{J_S}(W,k - 1) = {J_V}(W,k) + {\left\| {{\varphi _a}(k - 1) - {W^T}\Phi ({\gamma _a}(k - 1))} \right\|^2}. \hfill \\ 
\end{aligned}  \right.
	\end{small}
\end{equation}
Let $W_V(k)$ denote the minimizer of ${J_V}(W,k)$. According to Lemma \ref{lemma1}, since ${J_V}(W,k)$ conforms to the form of \eqref{eq16}, both the update from $W_V(k)$ to $W(k-1)$ and the update from $W_V(k)$ to $W(k)$ can be regarded as a single step of RLS. 

For the update from $W_V(k)$ to $W(k-1)$: 
\begin{equation}\label{eq34}
\begin{footnotesize}
	\begin{aligned}
	&W(k - 1) \\
	= &{W_V}(k) + P(k - 1)\Phi ({\gamma _a}(k - 1))\left[ {{\varphi _a}(k - 1) - W_V^T(k)\Phi ({\gamma _a}(k - 1))} \right]
	\end{aligned}
\end{footnotesize}
\end{equation}
where 
\begin{equation}\label{eq35}
		\begin{aligned}
{P^{ - 1}}(k - 1) =& P_0^{ - 1} + \sum\limits_{j = 1}^{{M_P}(k - 1)} {\Phi ({\gamma _j}(k - 1))} {\Phi ^T}({\gamma _j}(k - 1)) 
		\end{aligned}
\end{equation}

For the update from $W_V(k)$ to $W(k)$: 
\begin{equation}\label{eq36}
		\begin{aligned}
			W(k) = {W_V}(k) + P(k)\Phi ({\gamma _a}(k))\left[ {{\varphi _a}(k) - W_V^T(k)\Phi ({\gamma _a}(k))} \right]
		\end{aligned}
\end{equation}
where 
\begin{equation}\label{eq37}
		\begin{aligned}
			{P^{ - 1}}(k) = P_0^{ - 1} + \sum\limits_{j = 1}^{{M_P}(k)} {\Phi ({\gamma _j}(k))} {\Phi ^T}({\gamma _j}(k))
		\end{aligned}
\end{equation}

Substitute \eqref{eq35} into \eqref{eq37}, and \eqref{eq25} is obtained. 

Consider \eqref{eq25}, and an intermediate variable $P_S(k)$ is introduced as follows: 
\begin{equation}\label{eq38}
	\begin{aligned}
P_S^{ - 1}(k) &= {P^{ - 1}}(k) - \Phi ({\gamma _a}(k)){\Phi ^T}({\gamma _a}(k)) \\
&= {P^{ - 1}}(k - 1) - \Phi ({\gamma _a}(k - 1)){\Phi ^T}({\gamma _a}(k - 1)). 
	\end{aligned}
\end{equation}
Multiply both sides of \eqref{eq34} by ${P^{ - 1}}(k - 1)$ and we obtain: 
\begin{equation}\label{eq39}
	\begin{small}
\begin{aligned}
	{P^{ - 1}}(k - 1)W(k - 1) = P_S^{ - 1}(k){W_V}(k) + \Phi ({\gamma _a}(k - 1)){\varphi _a}(k - 1). \hfill \\ 
\end{aligned} 
	\end{small}
\end{equation}
Multiply both sides of \eqref{eq36} by ${P^{ - 1}}(k)$ and we obtain: 
\begin{equation}\label{eq40}
		\begin{aligned}
			{P^{ - 1}}(k)W(k) = P_S^{ - 1}(k){W_V}(k) + \Phi ({\gamma _a}(k)){\varphi _a}(k). \hfill \\ 
		\end{aligned} 
\end{equation}
Substitute \eqref{eq39} into \eqref{eq40}, and we obtain: 
\begin{equation}\label{eq41}
	\begin{footnotesize}
		\begin{aligned}
		W(k) =& P(k) {P^{ - 1}}(k - 1)W(k - 1) \\
		&+  P(k) \left[ \Phi ({\gamma _a}(k)){\varphi _a}(k) - \Phi ({\gamma _a}(k - 1)){\varphi _a}(k - 1)  \right] \\
		=& W(k-1) + P(k)\Phi ({{\gamma _a}(k)})\left[ {{{\varphi _a}(k)} - {W^T}(k-1)\Phi ({{\gamma _a}(k)})} \right] \hfill \\
		&- P(k)\Phi (\gamma_a(k-1))\left[ {\varphi_a(k-1) - {W^T}(k-1)\Phi (\gamma_a(k-1))} \right] \hfill \\ 
	\end{aligned}
	\end{footnotesize}
\end{equation}
which conforms to \eqref{eq25}. Therefore, the update from $W(k-1)$ to $W(k)$ conforms to \eqref{eq25} and \eqref{eq26} in this case. 

Since the update from $W(k-1)$ to $W(k)$ in both 1) and 2) can be realized using \eqref{eq25}, \eqref{eq26}, the proof is accomplished. 
\end{proof}

According to memory update functions \eqref{eq22}, we have $({\gamma _a}(k),{\varphi _a}(k)) = (x(k),y(k))$, and thus \eqref{eq25}, \eqref{eq26} can be reformulated as: 
\begin{itemize}[leftmargin=*]
	\item \textbf{Selective Memory Recursive Least Squares}
	\begin{equation}\label{eq42}
		\begin{small}
			\begin{aligned}
	&W(k) \\
	=& W(k-1) + P(k)\Phi ({x(k)})\left[ {{y(k)} - {W^T}(k-1)\Phi ({x(k)})} \right] \hfill \\
	&- P(k)\Phi (\gamma_a(k-1))\left[ {\varphi_a(k-1) - {W^T}(k-1)\Phi (\gamma_a(k-1))} \right] \hfill \\ 
&{P^{ - 1}}(k) \\
 =& {P^{ - 1}}(k - 1) + \Phi ({x(k)}){\Phi ^T}({x(k)}) - \Phi (\gamma_a(k-1)){\Phi ^T}(\gamma_a(k-1))
			\end{aligned} 
		\end{small}
	\end{equation}
\end{itemize}
where $({\gamma _a}(k - 1),{\varphi _a}(k - 1))$ represents the latest sample collected from the partition $\Omega^a$ satisfying $x(k) \in {\Omega ^a}$ before sampling time $k-1$ (including $k-1$). Different from classical RLS, SMRLS updates the weights of neural networks with both current approximation errors and recorded data $({\gamma _a}(k - 1),{\varphi _a}(k - 1))$, so extra memory is required. 

\begin{remark}\label{remark4}
{The matrix inversion lemma mentioned in \cite{r39} can also be used to avoid the calculation of $P^{-1}(k)$ in SMRLS. Specifically, let $P_V^{ - 1}(k) = {P^{ - 1}}(k - 1) - \Phi ({\gamma _a}(k - 1)){\Phi ^T}({\gamma _a}(k - 1))$, and it can be calculated with the matrix inversion lemma as: 
\begin{equation}\label{eq43}
	\begin{footnotesize}
	\begin{aligned}
	{P_V}(k) = P(k - 1) + \frac{{P(k - 1)\Phi ({\gamma _a}(k - 1)){\Phi ^T}({\gamma _a}(k - 1))P(k - 1)}}{{1 - {\Phi ^T}({\gamma _a}(k - 1))P(k - 1)\Phi ({\gamma _a}(k - 1))}}. 
\end{aligned}
	\end{footnotesize}
\end{equation}
Then $P(k)$ can also be calculated with the matrix inversion lemma as follows: 
\begin{equation}\label{eq44}
		\begin{aligned}
P(k) = {P_V}(k) - \frac{{{P_V}(k)\Phi (x(k)){\Phi ^T}(x(k)){P_V}(k)}}{{1 + {\Phi ^T}(x(k)){P_V}(k)\Phi (x(k))}}. 
		\end{aligned}
\end{equation}} 
\end{remark}

{An implementation of SMRLS where ${x_i},i = 1,2, \ldots ,n$ represents the normalized input of the RBFNN over $\left[ { - 1,1} \right]$ is given as follows. The normalized input space is evenly divided into $N_P$ lattice partitions. Let ${A=}\left[ {{\alpha_1},{\alpha_2}, \ldots ,{\alpha_{{N_P}}}} \right] \in {\mathbb{R}^{n \times {N_P}}}$, $B = \left[ {{\beta_1},{\beta_2}, \ldots ,{\beta_{{N_P}}}} \right] \in {\mathbb{R}^{1 \times {N_P}}}$ denote the matrix used to save the synthesized sample input ${\gamma _j}(k)$ and output ${\varphi _j}(k),j = 1,2, \ldots ,{N_P}$ of each partition, respectively. To calculate the serial number $a$ of the partition being visited by $x(k)$, i.e. $x(k) \in {\Omega ^a}$, an encoding function is defined with $\left\lceil  \cdot  \right\rceil $ representing round up: 
\begin{equation}\label{eq111}
Q(x(k)) = \sum\limits_{i = 1}^n {\left\lceil {\frac{{({x_i}(k) + 1){{({N_P})}^{\frac{1}{n}}}}}{2} - 1} \right\rceil }  * {({N_P})^{\frac{{i - 1}}{n}}} + 1
\end{equation}
which returns an integer serial number $Q(x(k)) \in \left[ {1,{N_P}} \right]$ for each $x(k)$. Then a specific implementation of SMRLS is shown in Algorithm \ref{SMRLS}. }

\begin{algorithm}
	\SetKwData{Left}{left}\SetKwData{This}{this}\SetKwData{Up}{up}
	\SetKwFunction{Union}{Union}\SetKwFunction{FindCompress}{FindCompress}
	\SetKwInOut{Input}{Input}\SetKwInOut{Output}{Output}
	\textbf{Input: }{sample input $x(k)$, sample output $y(k)$} \\
	\textbf{Output: }{weight vector $W(k)$ of the RBFNN} \\
	\textbf{Initialization: } number of partitions $N_P$, memory space for synthesized samples $A = {0_{n \times {N_P}}}$ and $B = {0_{1 \times {N_P}}}$, initial weight vector ${W_0} \in {\mathbb{R}^N}$, initial positive definite covariance matrix ${P_0} \in {\mathbb{R}^{N \times N}}$ \\
	\For{$k \geqslant 1$}{
		\emph{$a \leftarrow Q(x(k))$}\;
		\emph{${\gamma _a}(k - 1) \leftarrow {\alpha _a}$, ${\varphi _a}(k - 1) \leftarrow {\beta _a}$}\;
			\emph{${\alpha _a} \leftarrow x(k)$, ${\beta _a} \leftarrow y(k)$}\;
		\emph{$P(k) \leftarrow $update law \eqref{eq43}, \eqref{eq44}}\;
		\emph{$W(k) \leftarrow $update law \eqref{eq42}}\;
	}
	\caption{SMRLS}\label{SMRLS}
\end{algorithm}\DecMargin{1em}

\section{Discussions on SMRLS}\label{section4}

\subsection{Computational Complexity}\label{section4-1}
{Since the computational complexity is crucial for a real-time training algorithm, the time complexity of the SGD algorithm \eqref{eq13}, FFRLS algorithm \eqref{eq15}, VDFRLS algorithm mentioned in \cite{r32} and SMRLS algorithm with the covariance matrix update law \eqref{eq43}, \eqref{eq44} is compared. Consider the naive matrix multiplication, and the time complexity of these algorithms is shown in TABLE \ref{table1} with $N$ being the number of neurons. As a result of the matrix inversion lemma, the inversion operation in FFRLS and SMRLS is avoided reducing the time complexity to $O(N^2)$, while the singular value decomposition in VDFRLS cannot be avoided leading to the time complexity of $O(N^3)$. Compare the update laws of FFRLS and SMRLS, and it is shown that the time complexity of SMRLS is at most twice over FFRLS indicating the feasibility of SMRLS in practical real-time learning scenarios. }

\begin{table}[H]
	\caption{{Time Complexity}}
	\centering
	\label{table1}
	\begin{tabular}{ccccc}
		\toprule
		 Algorithm&SGD &FFRLS &VDFRLS &SMRLS \\
		\midrule 
		Time Complexity&$O(N)$&$O(N^2)$&$O(N^3)$&$O(N^2)$ \\
		\bottomrule
	\end{tabular}
\end{table}

{
The implementation of SMRLS requires each partition divided from the normalized input space to store the latest sample within it, so the space complexity of SMRLS is $O(N_P)$ with $N_P$ being the number of partitions. }

{
\begin{remark}\label{remark5}
The number of neurons and partitions of the input space will dramatically increase with the increase of the input space dimension, which leads to increased time and space complexity of SMRLS. Therefore, when the input dimension of the neural network is large, various feature selection and pruning methods such as the principal component analysis (PCA) and sensitivity analysis can be used to reduce the computational complexity of SMRLS \cite{r40, r41, r42, r43, r44}. 
\end{remark}}

\subsection{Learning Performance}\label{section4-2}
Compared with classical real time training methods with forgetting mechanisms, the learning performance (learning speed, accuracy and generalization capability) of SMRLS is given as follows: 
\begin{enumerate}
\item{The learning speed of the RBFNN is improved because the passive knowledge forgetting phenomenon is suppressed by SMRLS. With the memory mechanism, the importance of a sample is determined not only by when it is collected but also by its position in the input space. Therefore, the useful knowledge will not get lost simply because they are learned a long time ago, and thus the learning speed is improved. 
}
\item{The generalization capability of the learned knowledge is improved with SMRLS. As analyzed in Remark \ref{remark3}, the synthesized samples are approximately uniformly distributed over the normalized input space, which suppresses the non-uniformity between the training set and testing set. 
}
\item{{The learning accuracy of SMRLS is dependent on the density of the partitions. Although the partitioning operation helps to establish the memory mechanism, it also determines the maximal number of samples. Since too few samples will lead to unsatisfactory learning accuracy, $N_P$ should be designed properly. Compared with training methods with forgetting mechanisms, SMRLS generally achieves lower accuracy in regions where the samples are dense and higher accuracy in regions where the samples are sparse because of its approximately uniformly distributed synthesized samples. }
}
\end{enumerate}
\subsection{Prospects and Possible Improvements}\label{section4-3}
With its memory mechanism, SMRLS is suitable for long-term real-time learning tasks of the RBFNN where the accumulation of knowledge is important. In addition to the training of RBFNNs, SMRLS can also be used to update the parameters of other linearly parameterized approximators such as polynomials, autoregressive models and single layer feedforward networks \cite{r11,r39,r45,r46}. Therefore, it is promising in the field of neural network control, system identification and machine learning. 

Meanwhile, SMRLS still has some issues to be investigated including: i) other possible designs of the memory update functions, ii) different designs of the partitioning operation, iii) methods to reduce the computational complexity.

\section{Simulation Studies}\label{section5}
In this section, an RBFNN is employed to approximate an unknown function $f_I(x)$ of a single inverted pendulum-cart system, whose state space description is formulated as follows: 
\begin{equation}\label{eq45}
	\left\{ \begin{aligned}
		{{\dot x}_1} &= {x_2} \hfill \\
		{{\dot x}_2} &= {f_I}(x) + {g_I}(x)u \hfill \\
	\end{aligned}  \right.
\end{equation}
where 
\begin{equation}\label{eq46}
	{f_I}(x) = \frac{{g\sin {x_1} - mlx_2^2\cos {x_1}\sin {x_1}/({m_c} + m)}}{{l(4/3 - m{{\cos }^2}{x_1}/({m_c} + m))}}. 
\end{equation}
The state vector $x = {\left[ {{x_1},{x_2}} \right]^T} \in \Omega_x$, where $x_1$ and $x_2$ represent the angular position and velocity of the pendulum, respectively, $g=9.8m/s^2$ is the gravitational acceleration, $m_c=0.1kg$ is the weight of the cart, $m=0.02kg$ is the weight of the pendulum, and $l=0.2m$ is half the length of the pendulum. It is assumed that the output of the unknown function $f_I(x)$ can be measured directly as follows: 
\begin{equation}\label{eq47}
	y(k) = {f_I}(x(k)). 
\end{equation}

{
According to \eqref{eq18}, the input space $\Omega_x$ is normalized into the scope of $\left[ { - 1,1} \right] \times \left[ { - 1,1} \right]$, with $\bar x = {\left[ {{{\bar x}_1},{{\bar x}_2}} \right]^T}$ being the normalized state vector. Let $\bar x$ be the input of the RBFNN, and the objective of the real-time approximation is to approximate $f_I(x)$ with ${W^T}\Phi (\bar x)$ over the input space $\Omega_x$ such that the following objective function is minimized: 
\begin{equation}\label{eq48}
J(W) = \int_{{\Omega _x}} {{{\left\| {{f_I}(x) - {W^T}\Phi (\bar x))} \right\|}^2}} dx. 
\end{equation}
For hyperparameter settings of the RBFNN, there are $3 \times 3$ neurons evenly distributed over $\left[ { - 1,1} \right] \times \left[ { - 1,1} \right]$ with the receptive field width of ${\sigma _i} = 1$. }

{In this section, the following real-time training methods are compared: SGD \eqref{eq13}, FFRLS \eqref{eq15}, VDFRLS mentioned in \cite{r32}, and SMRLS \eqref{eq42} with the covariance matrix calculation mentioned in Remark \ref{remark4}. For fairness of the comparisons, the selection of parameters follows two rules: the parameter $P_0$ shared by all RLS based methods are set to the same value that yields a good performance, and the other parameters are optimized by trial and error. The hyperparameters of these methods are shown in TABLE \ref{table2} with $I$ being the identity matrix. The forgetting threshold of VDFRLS is set to $\varepsilon=0.5$. When SMRLS is used to train the RBFNN, the normalized input space $\left[ { - 1,1} \right] \times \left[ { - 1,1} \right]$ is evenly divided into $100 \times 100$ partitions. The sampling period (simulation step size) is set to $0.01s$. }

\begin{table}[h]
	\centering
	\caption{{Hyperparameters of Different Methods}}
	\label{table2}
	\begin{tabular}{|c|c|c|c|c|}
		\hline
		&$W_0$&$P_0$&$\eta$&$\lambda$ \\
		\hline
		SMRLS&$0$&$10I$&$-$&$-$ \\
		\hline
		SGD&$0$&$-$&$0.02$&$-$ \\
		\hline
		FFRLS&$0$&$10I$&$-$&$0.999$ \\
		\hline
		VDFRLS&$0$&$10I$&$-$&$0.999$ \\
		\hline
	\end{tabular}
\end{table}

\subsection{Simulation Studies for Sensitivity to New Data}\label{section5-1}
To test the sensitivity of different methods to new data, the following real-time learning scenarios are considered. 

\noindent \textbf{(A) Repetitive state trajectory with abrupt parameter perturbation}

Consider the following state trajectory of ${x_1}(t)$ and corresponding measurement: 
\begin{equation}\label{eq49}
\left\{ \begin{gathered}
	{x_1}(t) = \sin t \hfill \\
	y(k) = {f_I}(x(k)) \hfill \\ 
\end{gathered}  \right.
\end{equation}
where the value of $l$ in \eqref{eq46} is changed abruptly as below: 
\begin{equation}\label{eq50}
l = \left\{ \begin{gathered}
	0.2, \text{ } for \text{ } 0 \leqslant t < 50s \hfill \\
	0.3, \text{ } for \text{ } 50s \leqslant t \leqslant 100s.  \hfill \\ 
\end{gathered}  \right.
\end{equation}
The training of the RBFNN is carried out over $\left[ {0,100s} \right]$. The input space $\Omega_x$ is set to the state trajectory $\mu(x(0))$, i.e. the RBFNN is expected to realize accurate approximation of $f_I(x)$ along $\mu(x(0))$. 

Fig. \ref{fig2} and Fig. \ref{fig3} demonstrate the convergence process of the approximation errors and weights, respectively. In the first $50s$ of the training, the RBFNNs have approximated the unknown function $f_I(x)$ with the parameter $l=0.2m$, and thus when the parameter abruptly changes, the RBFNN needs to replace the learned data with new data. Compared with other methods, SMRLS achieves higher learning (weight convergence) speed after the parameter perturbation. According to the derivation of SMRLS in Section \ref{section3}, If the sampling period is small enough, SMRLS can replace the outdated data in one period for any repetitive trajectories. 
\begin{figure*}
	\makebox[\textwidth][c]{\includegraphics[height=0.2\textwidth,width=1\textwidth]{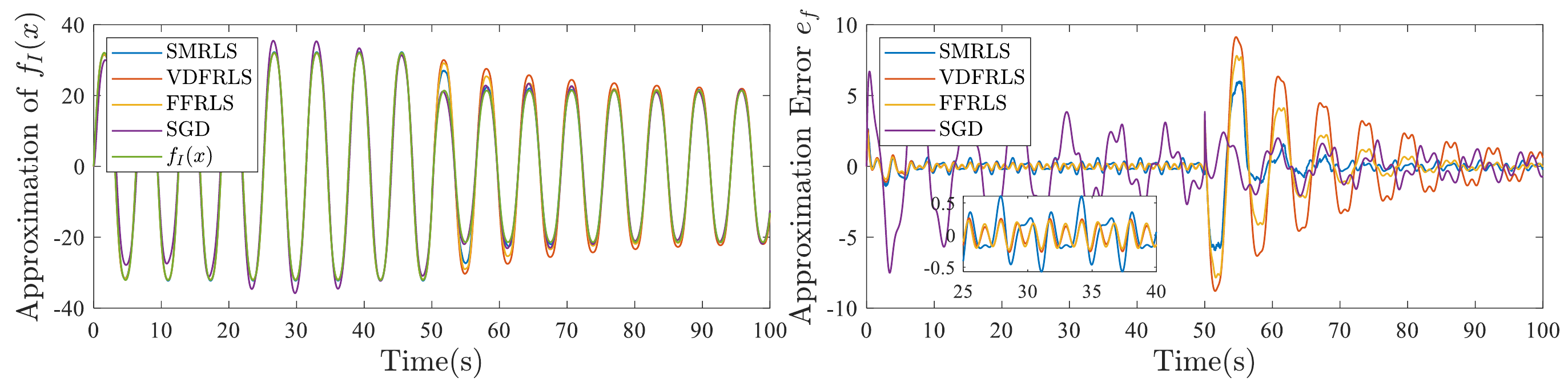}}
	\caption{Real-time tracking performance in case (A) }
	\label{fig2}\vspace{-1em}
\end{figure*}

\begin{figure*}
	\makebox[\textwidth][c]{\includegraphics[height=0.38\textwidth,width=1.03\textwidth]{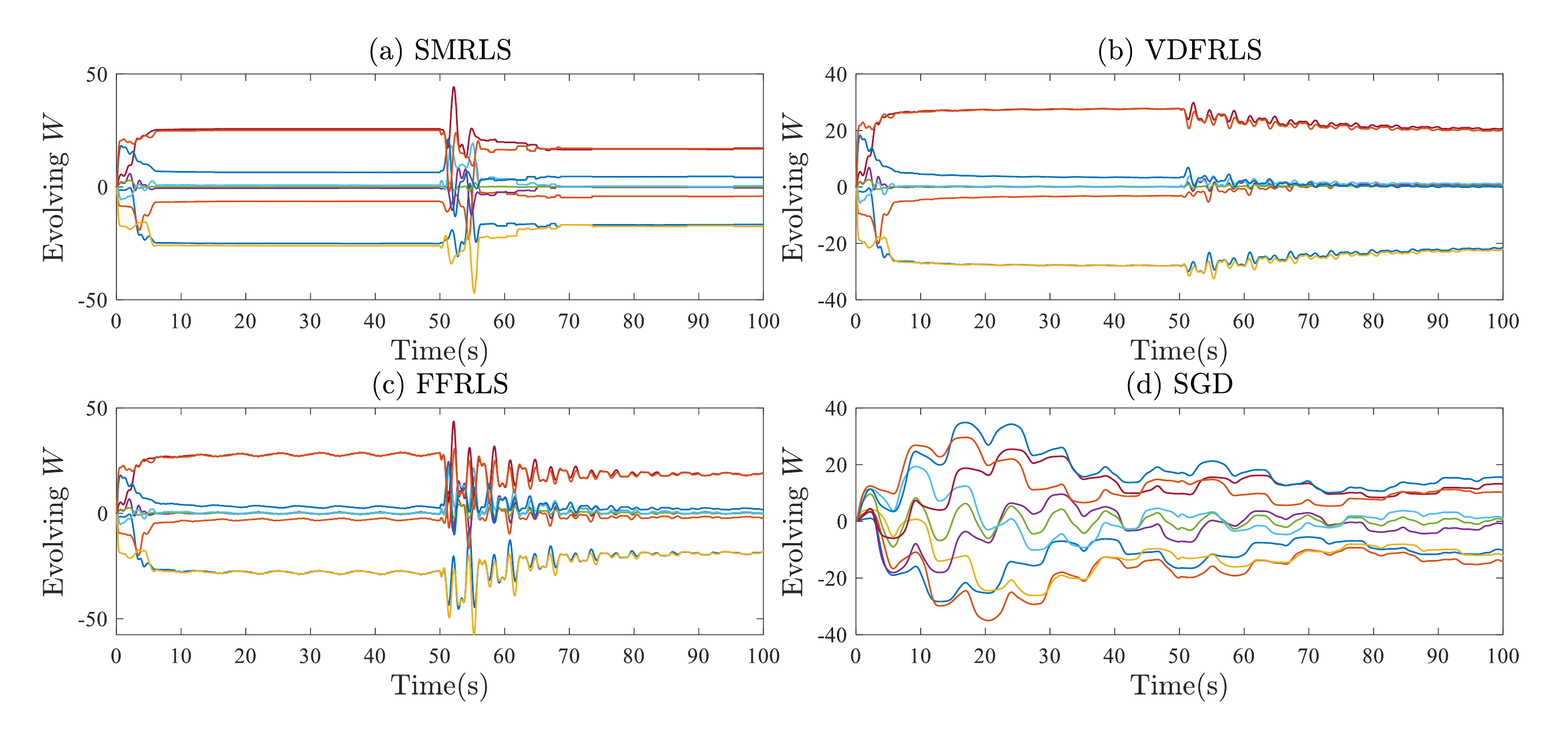}}
	\caption{Convergence of the weights in case (A) }
	\label{fig3}\vspace{-1em}
\end{figure*}

To test the learning accuracy, the weights at $100s$ are recorded as the learned knowledge to approximate $f_I(x)$ with the latest parameter $l=0.3m$. Let ${e_f} = {f_I}(x) - {W^T}\Phi (\bar x)$ denote the approximation error. The approximation using the learned knowledge along \eqref{eq49} is shown in Fig. \ref{fig4}. While all of the four methods achieve the basic approximation of $f_I(x)$, the accuracy of the RLS based methods are obviously better than SGD because the passive knowledge forgetting greatly reduces the learning speed of SGD. Among the RLS based methods, FFRLS achieves the highest accuracy because the forgetting mechanism is suitable for persistently exciting input signals \cite{r7,r32}. The reason why SMRLS-R is less accurate than FFRLS lies in the limitation of the maximum number of samples, i.e. $N_P$, mentioned in Section \ref{section4-2}. The increase of $N_P$ will improve the approximation accuracy but also increases the computational complexity and the requirement for sampling period.

\begin{figure*}
	\makebox[\textwidth][c]{\includegraphics[height=0.19\textwidth,width=1\textwidth]{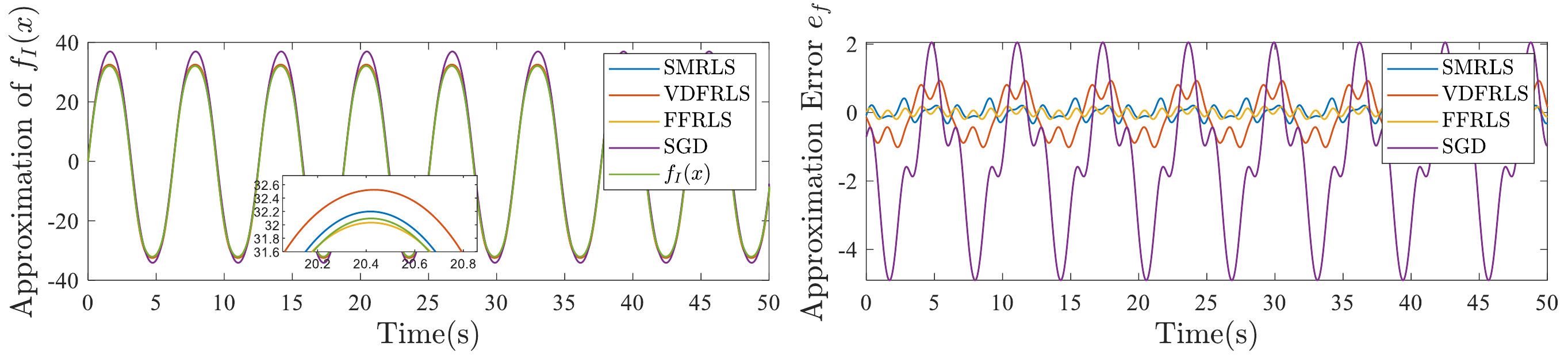}}
	\caption{Approximation using learned knowledge in case (A) }
	\label{fig4}\vspace{-1em}
\end{figure*}

\subsection{Simulation Studies for Passive Knowledge Forgetting}\label{section5-2}
\noindent \textbf{(B) Non-repetitive state trajectory}

To test the suppression effect of SMRLS to passive knowledge forgetting, the following non-repetitive state trajectory is considered: 
\begin{equation}\label{eq51}
\left\{ \begin{gathered}
	{x_1}(t) = \frac{{(20 + t)\sin t}}{{120}} \hfill \\
	y(k) = {f_I}(x(k)) \hfill \\ 
\end{gathered}  \right.
\end{equation}
where the duration of the learning task is set to $100s$. The input space $\Omega_x$ is also set to the state trajectory $\mu(x(0))$. 

The spiral line trajectory \eqref{eq51} is shown in Fig. \ref{fig5}. Fig. \ref{fig6} shows the continuously updated weights as increasing knowledge is learned by the RBFNN. Fig. \ref{fig7} shows the real-time tracking effect of different methods, where FFRLS has the highest tracking accuracy. However, high tracking accuracy does not mean high learning accuracy. 

\begin{figure}[htbp]
	\centering
	\includegraphics[height=0.19\textwidth,width=0.5\textwidth]{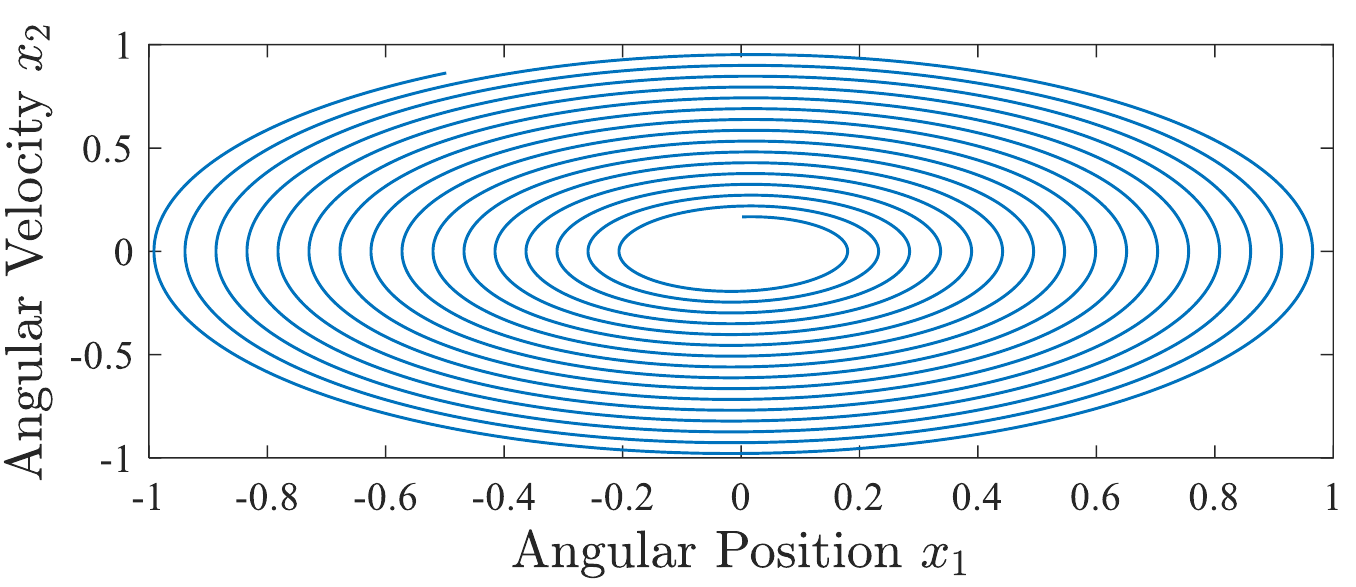}
	\caption{Non-repetitive trajectory of case (B)}\vspace{-1em}\label{fig5}
\end{figure}

\begin{figure}[htbp]
	\centering
	\includegraphics[height=0.19\textwidth,width=0.5\textwidth]{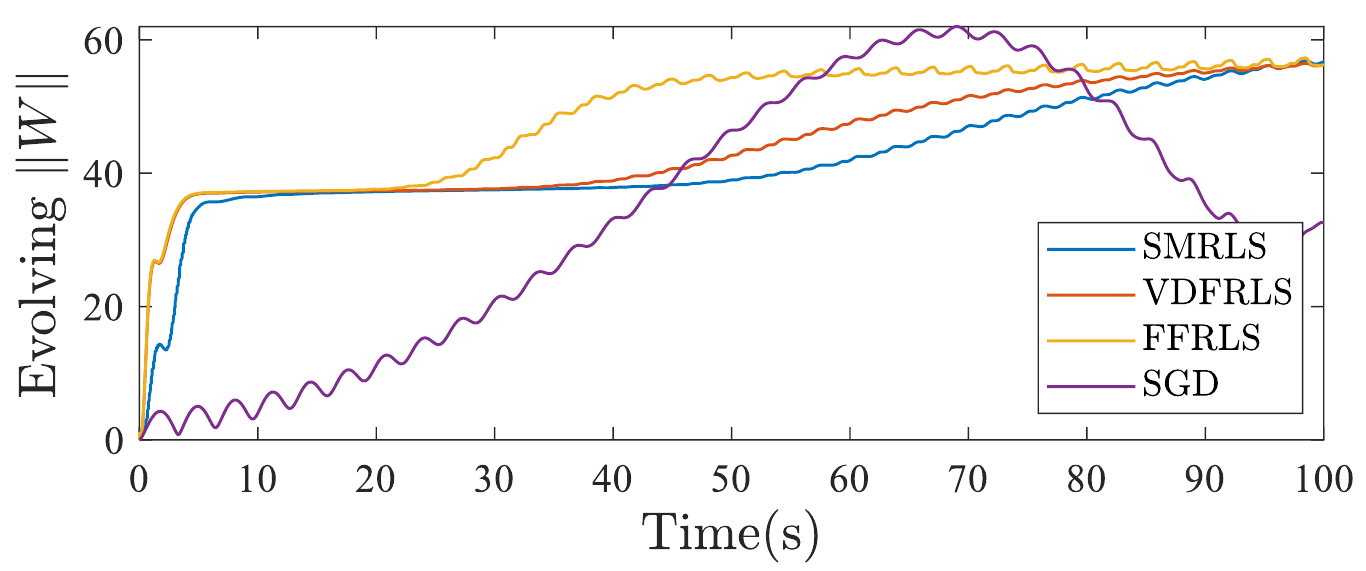}
	\caption{Evolving weight vectors of RBFNNs in case (B)}\vspace{-1em}\label{fig6}
\end{figure}

\begin{figure*}
	\makebox[\textwidth][c]{\includegraphics[height=0.19\textwidth,width=1\textwidth]{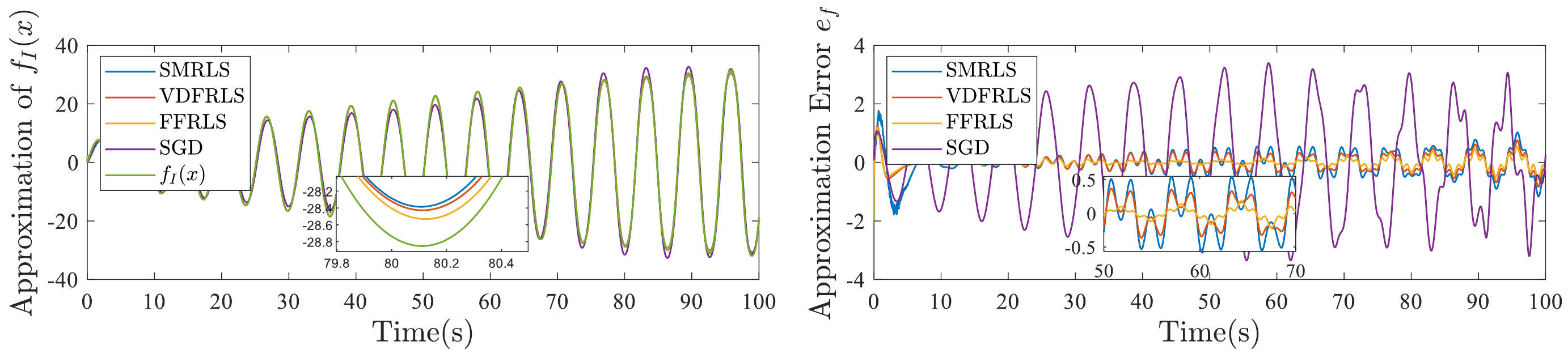}}
	\caption{Real-time tracking performance in case (B)}
	\label{fig7}\vspace{-1em}
\end{figure*}

The weight vectors at $100s$ are recorded as the learned knowledge and Fig. \ref{fig10} demonstrates the learning accuracy of the methods. The forgetting mechanism of FFRLS improves its real-time tracking accuracy while reducing the accuracy of the learned knowledge because the past samples are gradually forgotten. Therefore, the learned knowledge of FFRLS is less accurate over $\left[ {0,90s} \right]$ compared with SMRLS and VDFRLS. It is also shown that the passive knowledge forgetting phenomenon is suppressed with SMRLS and VDFRLS. 

\begin{figure*}
	\makebox[\textwidth][c]{\includegraphics[height=0.19\textwidth,width=1\textwidth]{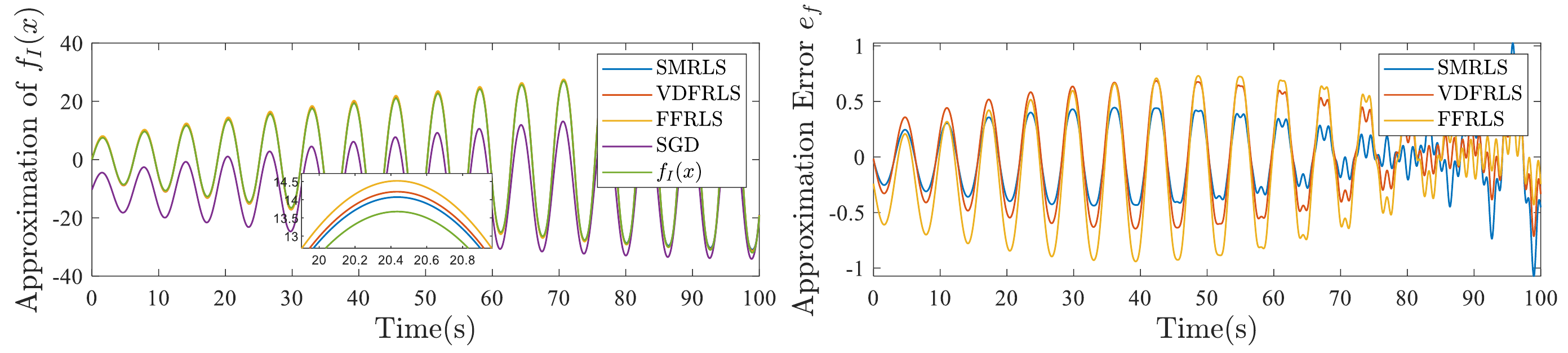}}
	\caption{Approximation using learned knowledge in case (B)}
	\label{fig8}\vspace{-1em}
\end{figure*}

\subsection{Simulation Studies for Generalization Capability}\label{section5-3}
\noindent \textbf{(C) Non-uniformly distributed state trajectory}

One of the most important merits of SMRLS is its good generalization capability obtained from its approximately uniformly distributed synthesized samples. To demonstrate this merit, the random non-uniform rational B-splines (NURBS) trajectory $\mu (x(0))$ in Fig. \ref{fig9} is adopted. In this case, the input space $\Omega_x$ is set to $\left[ { - 1,1} \right] \times \left[ { - 1,1} \right]$, which means that the RBFNN is expected to achieve a globally accurate approximation of $f_I(x)$ over $\left[ { - 1,1} \right] \times \left[ { - 1,1} \right]$. As shown in case (A) and (B), SGD performs worse than the RLS based methods, so its corresponding results are omitted in this case. 

\begin{figure*}
	\makebox[\textwidth][c]{\includegraphics[height=0.19\textwidth,width=1\textwidth]{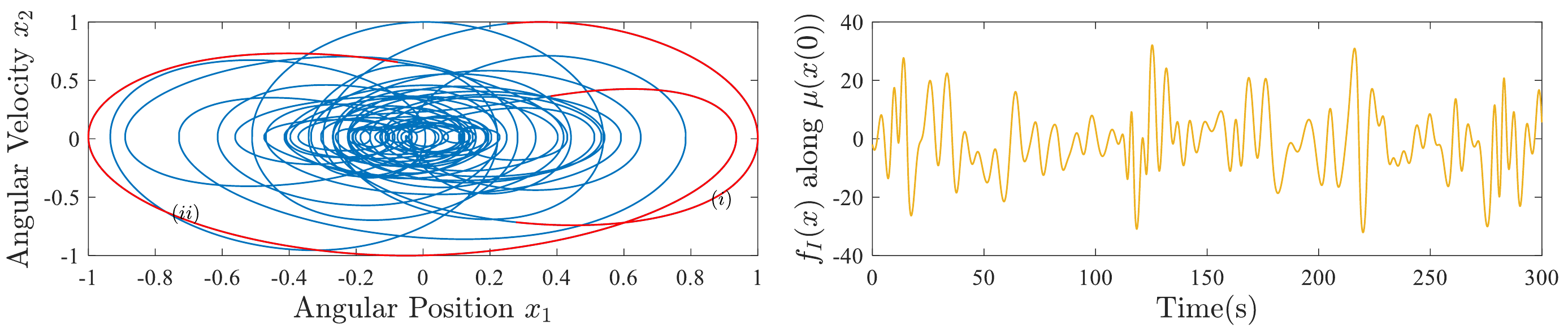}}
	\caption{Random NURBS trajectory of case (C)}
	\label{fig9}\vspace{-1em}
\end{figure*}

Fig. \ref{fig10} demonstrates the approximation errors and evolving weights in the learning process, indicating that FFRLS has higher real-time tracking accuracy while SMRLS and VDFRLS achieve faster learning. The weight vectors at $300s$ are recorded as the learned knowledge and Fig. \ref{fig11} demonstrates the learning accuracy of these methods. The two subfigures $(i), (ii)$ in Fig. \ref{fig10} and Fig. \ref{fig11} are corresponding to the two segments in red of Fig. \ref{fig9}, i.e. the segments whose surrounding samples are sparse. As shown in Fig. \ref{fig11}, compared with FFRLS and VDFRLS, SMRLS achieves more balanced approximation accuracy along the trajectory, i.e. higher accuracy over the regions whose samples are sparse and lower accuracy over the regions whose samples and dense compared with FFRLS. 

\begin{figure*}
	\makebox[\textwidth][c]{\includegraphics[height=0.19\textwidth,width=1\textwidth]{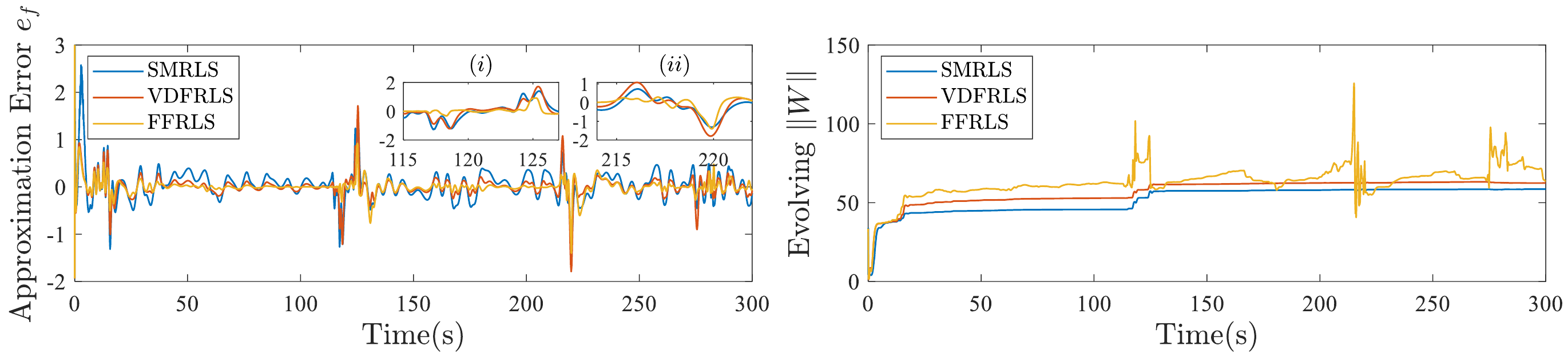}}
	\caption{Tracking performance and weight convergence in case (C)}
	\label{fig10}\vspace{-1em}
\end{figure*}

\begin{figure}[htbp]
	\centering
	\includegraphics[height=0.19\textwidth,width=0.5\textwidth]{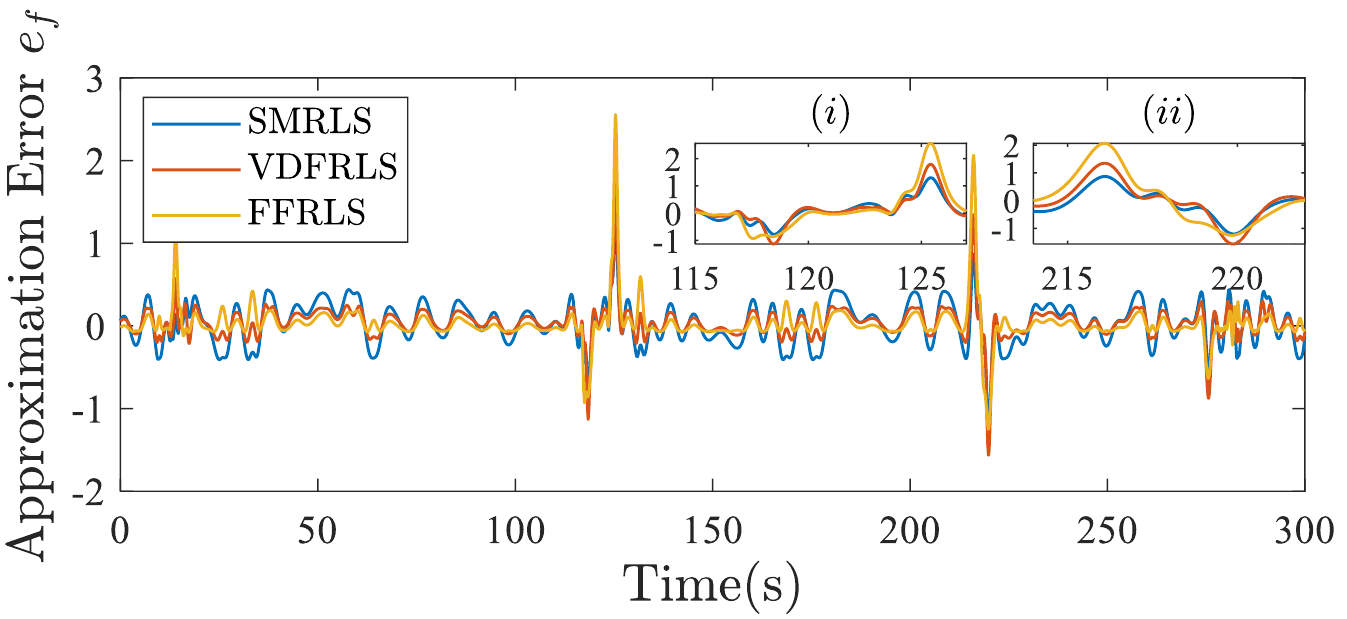}
	\caption{Approximation using learned knowledge in case (C)}\vspace{-1em}\label{fig11}
\end{figure}

{To test the generalization capability of the learned knowledge, the recorded weight vector is adopted to approximate $f_I(x)$ along trajectory \eqref{eq51} over $\left[ {0,100s} \right]$, which nearly traverses the input space $\left[ { - 1,1} \right] \times \left[ { - 1,1} \right]$. As shown in Fig. \ref{fig12}, the knowledge learned with SMRLS achieves more balanced approximation accuracy over $\left[ { - 1,1} \right] \times \left[ { - 1,1} \right]$ and thus obtains better generalization capability between different tasks. }

\begin{figure}[htbp]
	\centering
	\includegraphics[height=0.19\textwidth,width=0.5\textwidth]{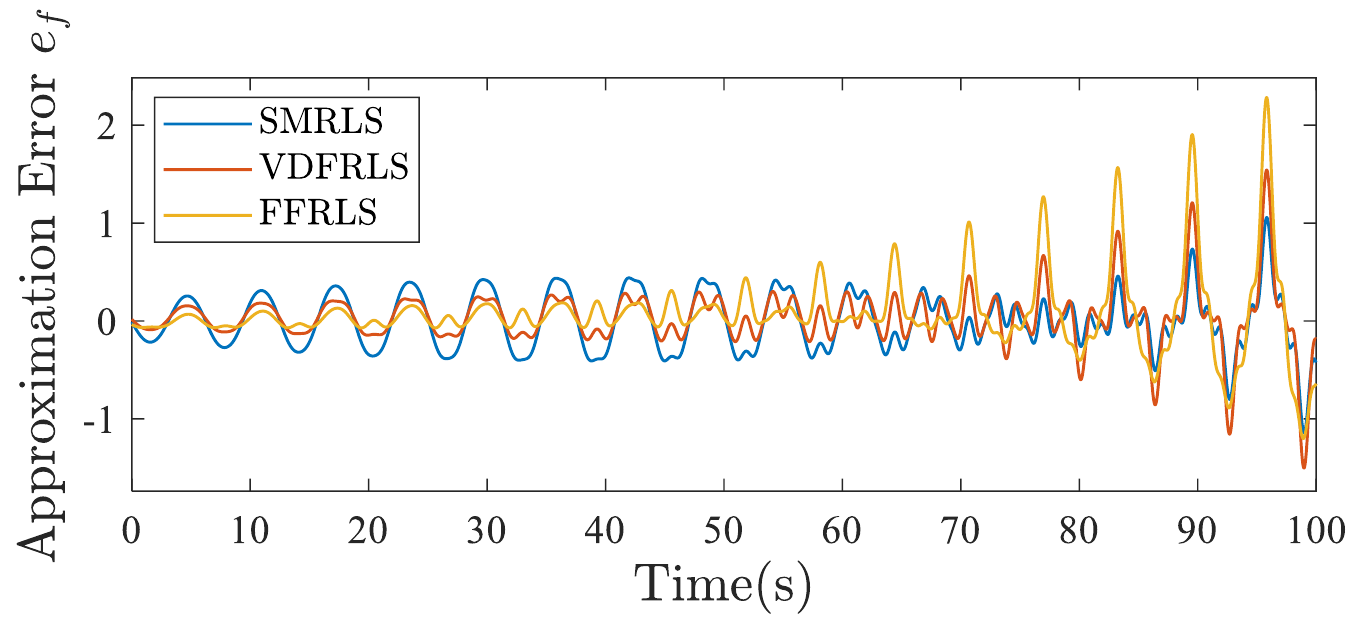}
	\caption{Approximation using learned knowledge along an ergodic trajectory}\vspace{-1em}\label{fig12}
\end{figure}

\subsection{Simulation Studies for Global RBFNN}
{In addition to localized approximation, the RBFNN also has the global approximation capability as mentioned in Section \ref{section2-2}. To test the effectiveness of SMRLS when the global RBFNN is adopted, the simulation of case (B) is repeated with a larger receptive field width ${\sigma _i} = 2$ while other settings remain unchanged. Fig. \ref{fig13} and Fig. \ref{fig14} demonstrate the tracking effect and weight convergence in the learning process. Record the weights at $100s$ as the learned knowledge and Fig. \ref{fig15} shows the approximation using the learned knowledge along \eqref{eq51}. Although the tracking and learning accuracy of the RLS based methods is lower compared with the results in Fig. \ref{fig7} and Fig. \ref{fig8}, the global RBFNN also achieves relatively accurate approximation of $f_I(x)$. As shown in Fig. \ref{fig15}, SMRLS can also suppress passive knowledge forgetting in this case. } 

\begin{figure*}
	\makebox[\textwidth][c]{\includegraphics[height=0.19\textwidth,width=1\textwidth]{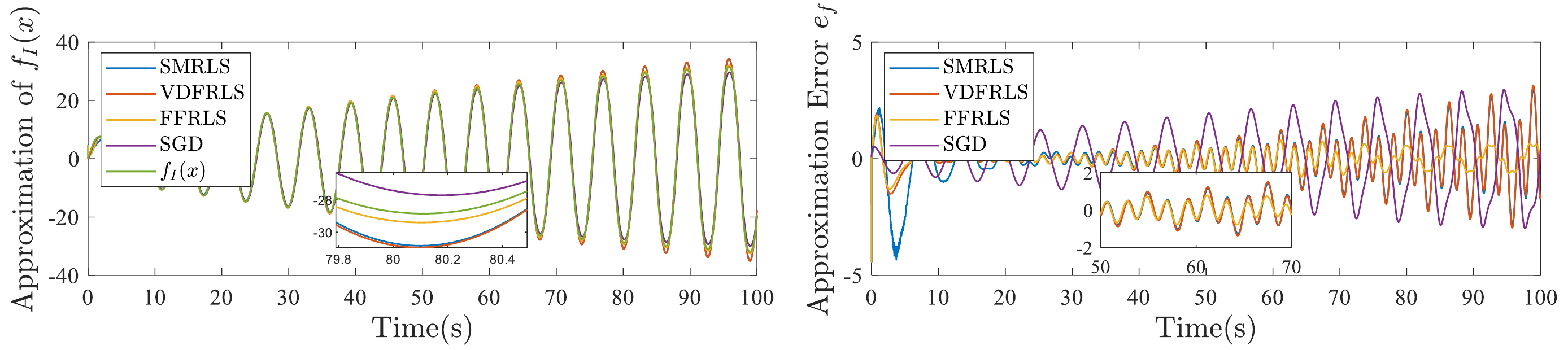}}
	\caption{Tracking performance with the global RBFNN}
	\label{fig13}\vspace{-1em}
\end{figure*}

\begin{figure}[htbp]
	\centering
	\includegraphics[height=0.19\textwidth,width=0.5\textwidth]{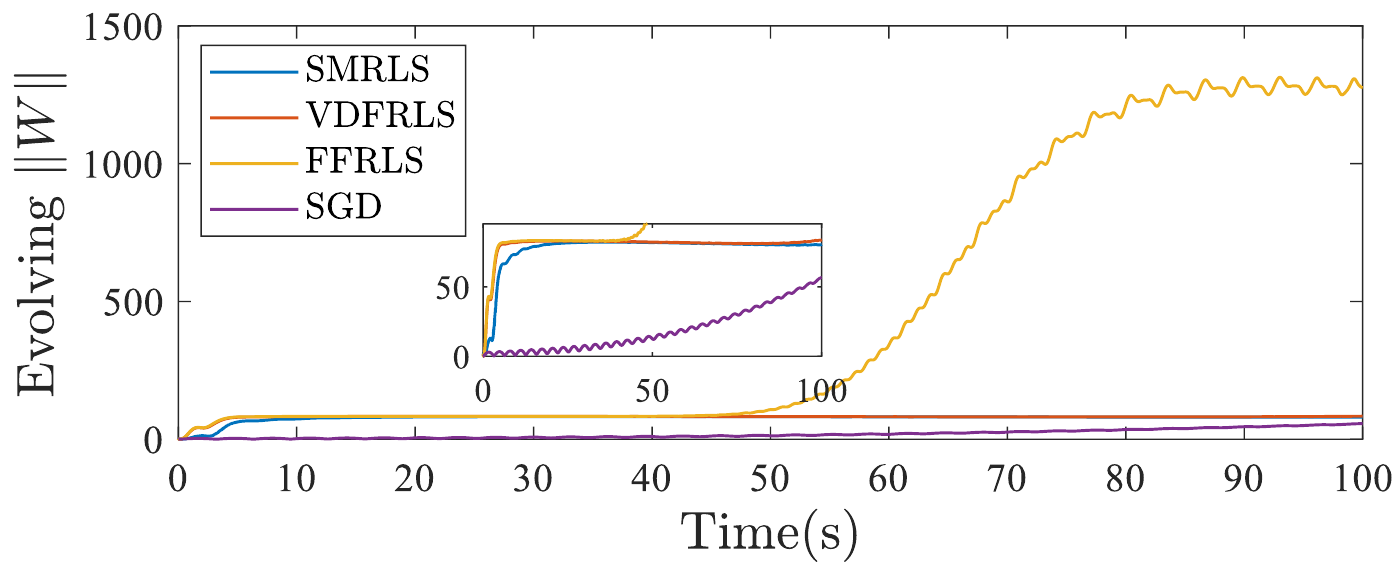}
	\caption{Weight convergence of the global RBFNN}\vspace{-1em}\label{fig14}
\end{figure}

\begin{figure}[htbp]
	\centering
	\includegraphics[height=0.38\textwidth,width=0.5\textwidth]{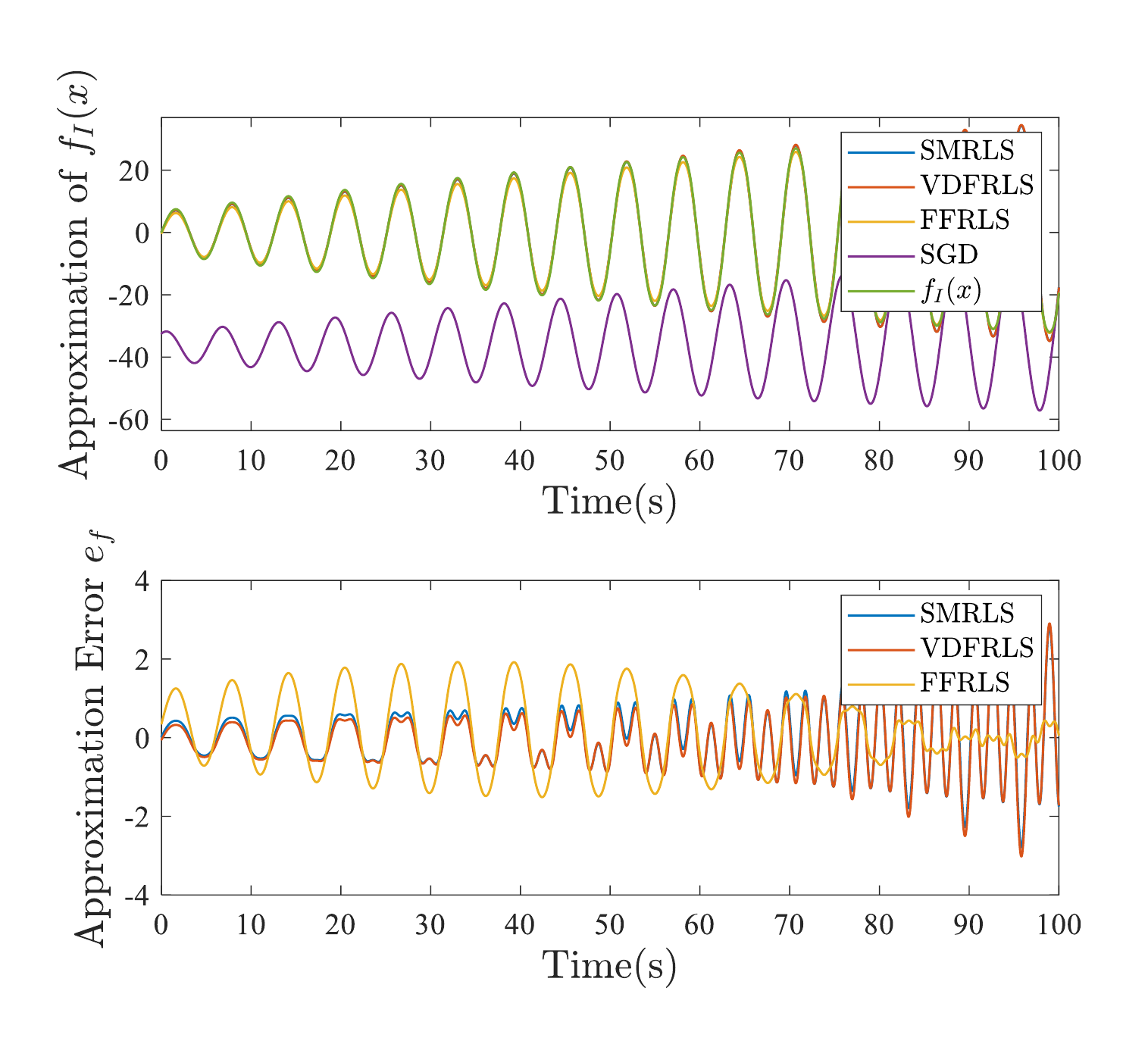}
	\caption{Approximation using learned knowledge with the global RBFNN}\vspace{-1em}\label{fig15}
\end{figure}

The aforementioned simulation results demonstrate that SMRLS achieves faster learning speed and better generalization capability compared with FFRLS and VDFRLS while the accuracy of it is determined by the number of partitions. {It is important to note that the learning accuracy of SMRLS cannot always be improved by increasing the number of partitions, as this will lead to greater computational complexity and reduce the sensitivity of the neural network to new data. }

\section{Conclusion}\label{section6}
In this paper, an RBFNN based real-time function approximation problem is studied and an algorithm named selective memory recursive least squares is proposed to suppress the passive knowledge forgetting phenomenon caused by classical forgetting mechanisms. Featured with the memory mechanism, SMRLS evaluates the importance of samples through both temporal and spatial distribution of the samples such that useful knowledge will not be forgot by the objective function simply because it is learned a long time ago. Compared with training methods with forgetting, SMRLS achieves the merits of fast learning speed and good generalization capability. In addition to the RBFNN, SMRLS is also applicable for real-time training of other linearly parameterized approximators such as polynomials, autoregressive models and other single layer feedforward networks. Moreover, as a result of its good learning performance and acceptable computational complexity, SMRLS has considerable potentials in the field of adaptive control, system identification and machine learning.


\ifCLASSOPTIONcaptionsoff
  \newpage
\fi

\small
\bibliographystyle{ieeetr}
\bibliography{refer}

\end{document}